\theoremstyle{plain}
\newtheorem{thm}{\protect\theoremname}
\theoremstyle{plain}
\newtheorem{cor}{\protect\corollaryname}
\theoremstyle{definition}                           
\newtheorem{fact}{Fact}
\theoremstyle{plain}
\newtheorem*{teorema*}{Theorem}
\newtheorem{proposizione}{Proposition}            
\theoremstyle{remark}                                       
\newcommand{\be}{\begin{equation}}
\newcommand{\ee}{\end{equation}}
\newcommand{\bs}{\begin{split}}
\newcommand{\es}{\end{split}}
\newcommand{\ot}{\otimes}
\newcommand{\ba}{\begin{eqnarray}}
\newcommand{\ea}{\end{eqnarray}}
\newcommand\stab{{\operatorname{STAB}}}
\newcommand\stabone{{\operatorname{STAB}_{1}}}
\newcommand\stabzero{{\operatorname{STAB}_{0}}}
\newcommand{\ignore}[1]{}
\newcommand{\pauli}[1]{\mathbb{P}_{#1}}
\newcommand{\Cl}{\mathcal{C}}
\DeclareMathOperator{\sign}{sign}
\newcommand{\Var}{\mathsf{Var}}
\providecommand{\corollaryname}{Corollary}
\providecommand{\theoremname}{Theorem}
\newcommand{\1}{\leavevmode{\rm 1\ifmmode\mkern  -4.8mu\else\kern -.3em\fi I}}
\renewcommand{\cal}{\mathcal}
\renewcommand{\tr}[1]{{\rm Tr}\left[#1\right]}
\begin{document}

\title{Non-stabilizerness and violations of CHSH inequalities}
\author{Stefano Cusumano}
\email[Corresponding author: ]{ste.cusumano@gmail.com}
\affiliation{Dipartimento di Fisica `Ettore Pancini', Universit\`a degli Studi di Napoli Federico II,
Via Cintia 80126,  Napoli, Italy}
\affiliation{INFN, Sezione di Napoli, Italy}

\author{Lorenzo Campos Venuti}
\affiliation{Dipartimento di Fisica `Ettore Pancini', Universit\`a degli Studi di Napoli Federico II,
Via Cintia 80126,  Napoli, Italy}
\affiliation{Department of Physics and Astronomy, University of Southern California, Los Angeles, USA}

\author{Simone Cepollaro}
\affiliation{Scuola Superiore Meridionale, Largo S. Marcellino 10, 80138 Napoli, Italy}
\affiliation{INFN, Sezione di Napoli, Italy}

\author{Immacolata De Simone}
\affiliation{Dipartimento di Fisica `Ettore Pancini', Universit\`a degli Studi di Napoli Federico II,
Via Cintia 80126,  Napoli, Italy}

\author{Gianluca Esposito}
\affiliation{Scuola Superiore Meridionale, Largo S. Marcellino 10, 80138 Napoli, Italy}
\affiliation{INFN, Sezione di Napoli, Italy}

\author{Daniele Iannotti}
\affiliation{Scuola Superiore Meridionale, Largo S. Marcellino 10, 80138 Napoli, Italy}
\affiliation{INFN, Sezione di Napoli, Italy}

\author{Barbara Jasser}
\affiliation{Scuola Superiore Meridionale, Largo S. Marcellino 10, 80138 Napoli, Italy}
\affiliation{INFN, Sezione di Napoli, Italy}

\author{Jovan Odavi\' c}
\affiliation{Dipartimento di Fisica `Ettore Pancini', Universit\`a degli Studi di Napoli Federico II,
Via Cintia 80126,  Napoli, Italy}
\affiliation{INFN, Sezione di Napoli, Italy}

\author{Michele Viscardi}
\affiliation{Dipartimento di Fisica `Ettore Pancini', Universit\`a degli Studi di Napoli Federico II,
Via Cintia 80126,  Napoli, Italy}
\affiliation{INFN, Sezione di Napoli, Italy}

\author{Alioscia Hamma}
\affiliation{Dipartimento di Fisica `Ettore Pancini', Universit\`a degli Studi di Napoli Federico II, Via Cintia 80126,  Napoli, Italy}
\affiliation{INFN, Sezione di Napoli, Italy}
\affiliation{Scuola Superiore Meridionale, Largo S. Marcellino 10, 80138 Napoli, Italy}

\begin{abstract}
We study quantitatively the interplay between entanglement and non-stabilizer resources in violating the CHSH inequalities. We show that, while non-stabilizer resources are necessary, they must have a specific structure, namely they need to be both asymmetric and (surprisingly) {\it local}. We employ stabilizer entropy (SE) to quantify the non-stabilizer resources involved and the probability of violation given the resources. We show how spectral quantities related to the flatness of entanglement spectrum and its relationship with non-local SE affect the CHSH inequality. Finally, we utilize these results - together with tools from representation theory - to construct a systematic way of building ensembles of states with higher probability of violation.
\end{abstract}

\maketitle

\section{Introduction}
It is a well-known fact that the Clauser-Horne-Shimony-Holt (CHSH)  inequality \cite{PhysRevLett.23.880} can be violated in quantum mechanics due to quantum entanglement. However, it has been recognized more recently that entanglement alone is not sufficient: resources beyond {\it stabilizerness} - colloquially known as {\it magic} - are also necessary~\cite{PhysRevA.91.042103, Howard2014,PhysRevA.85.022304, macedo2025witnessingmagicbellinequalities,1nkl-sphd}.

In this paper, we perform a theoretical and quantitative study of the interplay between the entangling and non-stabilizer resources necessary to violate the CHSH inequalities. To this end, we build a setting that does not  arbitrarily separate the resources involved in state preparation and measurement. 

Non-stabilizer resources are also necessary, together with entanglement, to attain quantum advantage~\cite{gottesman_stabilizer_1997,gottesman_heisenberg_1998,aaronson_improved_2004}. More generally, quantum complex behavior rises from the interplay of  both entangling and non-stabilizer resources~\cite{PhysRevB.96.020408, Leone2021quantumchaosis}. In order to quantify non-stabilizerness, we resort to the Stabilizer Entropy (SE), the unique computable monotone of non-stabilizerness for pure states~\cite{leone_stabilizer_2022, leone_stabilizer_2024}. SE is experimentally measurable~\cite{Oliviero2022} and efficiently computable by  tensor networks methods~\cite{Lami2023, tarabunga2024, TarTirChaDal23}. 

The availability of a computable non-stabilizerness measure such as SE , has allowed to test and quantify the role of non-stabilizer resources in several settings and scenarios, ranging from quantum phase transitions~\cite{Oliviero2021,10.21468/SciPostPhys.12.3.096,True2022transitionsin,catalano2024magicphasetransitionnonlocal,li2024measurementinducedmagicresources} and quantum chaos~\cite{Leone2021quantumchaosis,odavić2025stabilizerentropynonintegrablequantum,jasser2025stabilizerentropyentanglementcomplexity,cusumano2026probeschaoscliffordgroup,iannotti2026nonstabilizernessu1symmetrychaotic}, to high-energy physics~\cite{cao_gravitational_2024,brokemeier2024quantum, chernyshev2024quantum, White_White_2024}, quantum-information~\cite{PhysRevA.106.062434,PhysRevA.107.022429,PhysRevLett.132.080402,PhysRevA.109.022429,Leone2024learningtdoped, hou2025stabilizerentanglementmagichighway, gu2024magicinducedcomputationalseparationentanglement,Wang2023,PhysRevB.96.020408,10.21468/SciPostPhys.9.6.087,PhysRevA.109.L040401,iannotti2025entanglementstabilizerentropiesrandom,cepollaro2025stabilizerentropysubspaces}, and condensed matter~\cite{PhysRevA.106.042426, 10.21468/SciPostPhys.15.4.131, PhysRevA.108.042407, PRXQuantum.4.040317,PRXQuantum.3.020333, wei2025longrangenonstabilizernesstopologycorrelation,PhysRevD.109.126008,cepollaro2025harvestingstabilizerentropynonlocality,falcão2025magicdynamicsmanybodylocalized,PhysRevB.111.054301,ding2025evaluatingmanybodystabilizerrenyi,hoshino2025stabilizerrenyientropyconformal,iannotti2026nonlocalmagicresourcesfermionic}.

In this paper, we detail what kind of structure entanglement and SE need to have in order to violate the CHSH inequality. Counterintuitively, we prove that SE needs to be \textit{local} in order to obtain a violation: non-local SE \cite{cao_gravitational_2024} is shown to be detrimental to CHSH violations, bounding the maximum expectation value of the Bell operator. Moreover, the resources must be {\it asymmetric} between Alice and Bob. We use both Haar averaging and numerical techniques to compute the probability of violations given the resources. Finally, we use the technique of isospectral twirling to show how knowledge of the structure of entangling and non-stabilizer resources can be used to improve the probability of a CHSH violation under imperfect control~\cite{PhysRevLett.104.050401,cieslinski_how_2025,cieslinski_analysing_2024} or in satellite experiments~\cite{bedington_progress_2017,qtc2.12015}.  

\section{Setting the stage}\label{setup}
The Horodeckis' seminal work \cite{Horodecki1995}
 establishes necessary and sufficient conditions on the state preparation in order to violate CHSH inequalities by means of {\it local} measurements. However, it was noted in \cite{PhysRevA.91.042103} that one would need non-Clifford measurements. Because of the state-effect duality, though, it is clear that one could just perform measurements that are neutral from the stabilizer resources point of view if one prepares the state with the required resources. Let us consider an example. We work in the Hilbert space of two qubits ${\mathcal H}_{AB} \equiv {\mathcal H}_{A} \otimes {\mathcal H}_{B}$ and define the `resource free' CHSH operator (we omit the tensor product symbol when not strictly necessary)
\ba
\label{eq:b0_def}
B_0 := X\otimes (X+Z)+Z\otimes(-X+Z)=XX+XZ-ZX+ZZ \, . 
\ea
 \begin{figure}[!t]
\centering
\includegraphics[width=\linewidth]{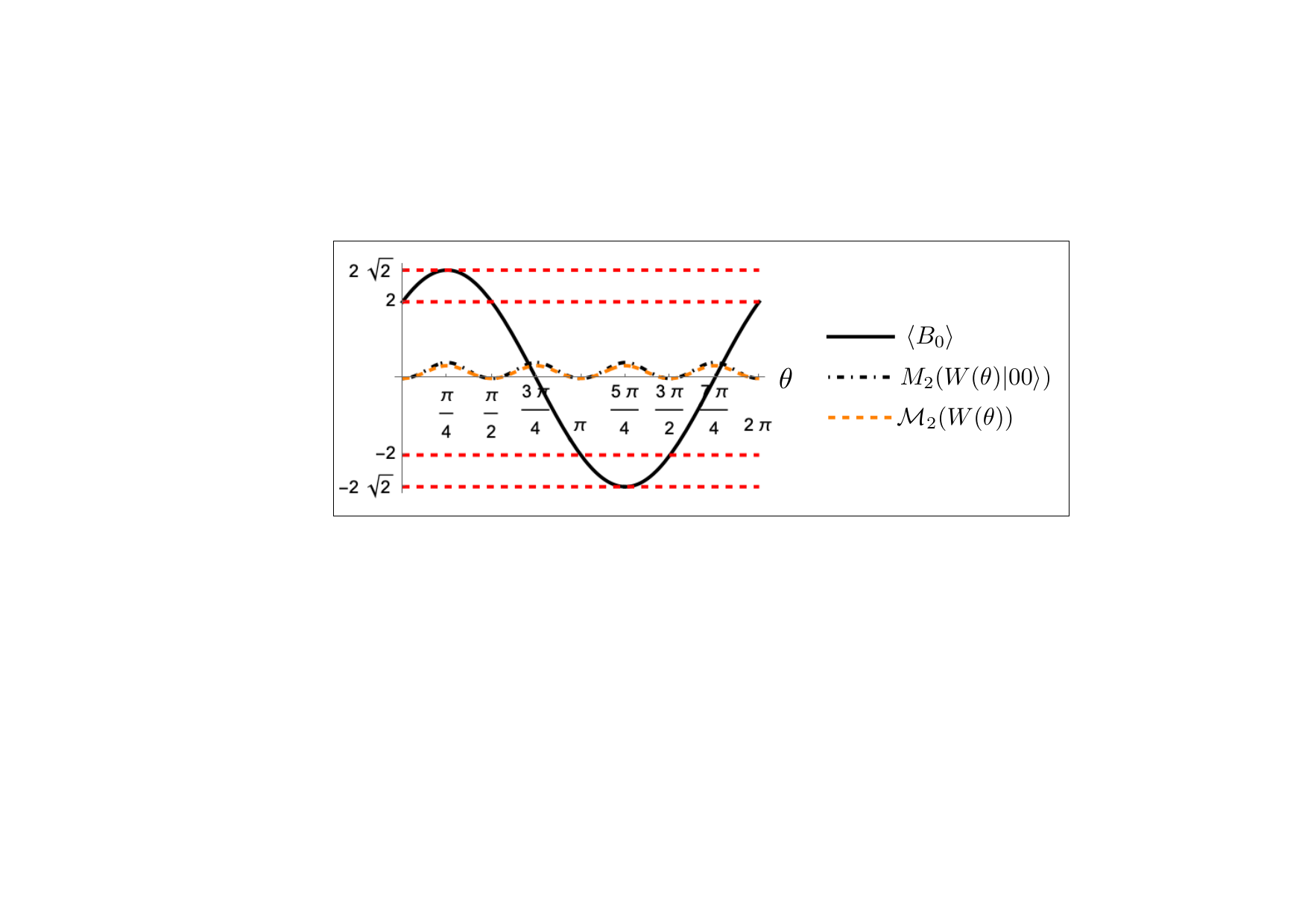}
\caption{Plot of the expectation value of the rotated operator $\langle B_0\rangle$ (solid black line),  the SE $M_2$ of the rotated Bell state $W(\theta)\ket{00}$ (dashed-dotted black line) and the non-stabilizing power $\mathcal{M}_2$ of the rotation operator (dashed orange line). The dashed red lines represent the values of violation of CHSH inequalities and the Tsirelson's bound.}
\label{fig:violations_theta}
\end{figure}
The above operator is resource free because it describes four measurements that are {\it both} local and within the stabilizer formalism, being Pauli measurements. A direct calculation shows that, even preparing a Bell state (say $\ket{\Phi^+}\equiv\ket{00}+\ket{11}/\sqrt{2}$), these measurements will not lead to a CHSH violation, as $\Tr(B_0\Phi^+)= 2$. On the other hand, if we prepare the state $R_y(\theta)  \otimes I\ket{\Phi^+}=\exp[-i\frac{\theta}{2}Y] \otimes I\ket{\Phi^+} = (R_y(\theta) \ot I)C_X (H\ot I) \ket{00}\equiv W (\theta) \ket{00}$, one can violate the CHSH inequalities for a certain range of the rotation angle $\theta$, see Fig.~\ref{fig:violations_theta}, even attaining the Tsirelson's bound for $\theta =\pi/4$. Equivalently, one could have prepared $\Phi^+$ and used the CHSH operator $R^\dag_y(\theta)B_0R_y(\theta)$. 

One of the main points of this article is to take seriously the fact that effects and preparations are indeed dual. We thus adopt the scheme in which we always start with a resource-free preparation $\omega_0\equiv \ket{00}\bra{00}$ (as both stabilizer and factorized) and perform resource-free measurements encoded in $B_0$. Then, all the resources are employed in the `unitary state evolution' $\omega_0\mapsto \omega_U\equiv U\omega_0 U^\dag$ to obtain the expectation value 
\ba
b_U := \Tr[B \omega_U] \equiv \Tr[B_U\omega_0].
\ea
This is the conceptual scheme of quantum computation in which one always initializes the state in all zeros, performs the computation by quantum operations, and finally measures in the computational basis \cite{Watrous_2018,kraus_book}. All the resources involved in the CHSH inequality have thus been encoded in the quantum operation (here unitary) $U$. This work shows what structure $U$ must have from the resource-theoretic point of view in order to violate the CHSH inequalities so all that follows can be shown for $B_0$ as resource-free operator without loss of generality. 

\section{Stabilizer Entropy\label{sec:SE}}

Consider a system of $n$ qubits and the set of Pauli strings $\mathbb{P}_n =\{{I},X,Y,Z\}^{\otimes n}$. The set $\mathbb{P}_n\times \{\pm 1,\pm i\}$ is the Pauli group $\mathcal P_n$ and its normallizer within the unitary group is the Clifford group $\mathcal C$. The orbit of the Clifford group through any computational basis state is the set of pure stabilizer states denoted by $\stab$. 
For a pure state $\ket{\psi}$ and $P\in \mathbb{P}_n$, the quantity 
$\Xi_P(\ket{\psi}) := d^{-1} (\Tr{P\psi})^2$ is a probability distribution over $\mathbb{P}_n$, with $d=2^n$. The $2$-SE $M_2(\ket{\psi})$ is defined as the (shifted) $2-$R\'enyi entropy
\ba\label{m2}
M_2(\ket{\psi}):= -\log\left[d\sum_{P\in \mathbb{P}_n} \Xi_P(\psi)^2\right]
=-\log \left( d^{-1}\sum_{P\in\mathbb{P}_n}\Tr^4 [\psi P] \right)
=-\log d\Tr[Q\dyad{\psi}^{\otimes4}],
\ea
where $Q:=d^{-2}\sum_{P\in\pauli{n}}P^{\otimes4}$.
The 2-SE can be extended to generic mixed states by $\tilde{M}_2(\rho)=M_2(\rho)-S_2(\rho)$, where  $S_2(\rho)=-\log\Tr[\rho^2]$ is the 2-R\'enyi entropy of $\rho$. Free states, denoted as $\stabzero$ are those of the form $\chi=d^{-1}\sum_{P\in\mathcal{G}}\phi_PP$, where $\mathcal{G}$ is an Abelian subgroup of the  Pauli group $\mathcal P_n\equiv \mathbb P_n \times \{\pm1, \pm i\}$ and $\phi_P=\pm1$ \cite{gottesman_heisenberg_1998}.  These are the (mixed) states that cannot be purified in a stabilizer state and one can show that $\stabzero = \{ \sigma \in \mathcal D(\mathcal H) : \tilde{M}_2 (\sigma)=0\}$ \cite{cao_gravitational_2024}. The convex hull of $\stab$ is denoted by $\stabone \supset\stabzero\supset \stab$. 

The 2-SE  $\tilde{M}_2$ is a good monotone for pure states \cite{leone_stabilizer_2024},
faithful with respect to the free states, invariant under Clifford unitaries $C\in\mathcal C$ and additive under tensor product, i.e. $\tilde{M}_2(\rho\ot\sigma)=\tilde{M}_2(\rho)+\tilde{M}_2(\sigma)$.

The SE is a probe in the computational complexity of a quantum state. Indeed, stabilizer states can be simulated efficiently using only classical resources, and thus are useless in order to outperform classical computation. On the other hand, non stabilizer states, i.e. states having a non null stabilizer entropy, require exponential classical resources in order to be simulated. Moreover, the SE quantifies the distinguishability of a state from a Haar random state~\cite{bittel2026operationalinterpretationstabilizerentropy}, i.e. the larger the SE of a state, the harder it is to distinguish it from Haar random.

As in this work we want to characterize the operations $U$ from the resource-theoretic point of view, we  define the non-stabilizing power of a unitary $U$ as the average 2-SE created by the action of $U$ on the orbit of stabilizer states:
\ba
\mathcal{M}_2(U) :=\frac{1}{|\rm{STAB}|}\sum_{\ket{\psi}\in\rm{STAB}}M_2(U\ket{\psi}) \ . 
\ea
For example, the non-stabilizing power of $W(\theta)$ is
\ba
\mathcal{M}_2(W(\theta))=-\frac{4}{5} \log \left(\frac{7+\cos (4 \theta)}{8} \right).
\ea
In Fig.~\ref{fig:violations_theta} we see how the magic power of $W(\theta)$ follows closely the magic of  the state $W(\theta)|00\rangle$. Moreover the maximal CHSH violations coincide with the maxima of both $M_2$ and $\mathcal{M}_2$. 

Another  related quantity characterizes the interplay between SE and entanglement: the \textit{non-local non-stabilizerness} first introduced in \cite{cao_gravitational_2024}. Given a bipartition $AB$ of the Hilbert space, the non-local non-stabilizerness $M_{\rm NL}$ is defined as:
\ba
\label{eq:non_local_magic_def}
M_{\rm NL}(\ket{\psi})=\min_{U_A\otimes U_B}M_2(U_A\ot U_B\ket{\psi}).
\ea
This quantity measures the amount of non-stabilizerness that is non-local, i.e.~that it cannot be erased from the state by means of local unitary operations. As the non-local SE is obtained from a minimization over local unitaries, it represents the amount of non-stabilizerness independent from the choice of the local basis. Another consequence of the minimization procedure is that non-local non-stabilizerness is solely dependent on the entanglement spectrum, and is thus considered to be representative of the interplay between entanglement and non-stabilizerness, which is the actual resource needed for quantum computation outperforming its classical counterpart. In \cite{qian2025quantumnonlocalmagic} an explicit expression for two qubit states is found: for any pure state $\ket{\psi}$ with entanglement spectrum $\{\cos^2(\theta),\sin^2(\theta)\}$, the non-local magic reads
\ba
    M_{\rm NL}(\ket\psi)=-\log\left(\frac{7+\cos(8\theta)}{8}\right) \; \ .
    \label{eq:m_nl_def}
\ea
The state such that $M_{\rm NL}(\ket\psi)=M_2(\ket\psi)$ is $\ket{r(\theta)}:=\cos\left(\theta\right)\ket{00}+\sin\left(\theta\right)\ket{11}$ with $\theta \in [0, \pi/2]$ modulo local Clifford unitaries. With these definitions in play, we are now ready to show the resource-theoretic structural elements of $U$ involved in violations of CHSH.

\section{Non-stabilizerness and violations of the CHSH inequality\label{sec:facts}}

In this section, we show some facts about the structure of entanglement and SE in the context of the CHSH inequality. Informally, we show that in order to violate the CHSH inequality  i) both entanglement and SE are necessary; ii) the preparation unitary $U$ must be asymmetric; iii) non-local magic hinders the violation of locality (!); and iv) probes of the interplay between entanglement and SE, like the capacity of entanglement, offer a valuable insight on the nature of the violation (or lack thereof).

Let us start by showing that $U$ must be both entangling and non-Clifford in order to violate CHSH. We indicate with $\mathcal{C}$ the Clifford group (the normalizer of the Pauli group). 

\begin{thm}
\label{theorem1}
Given an operator $B=P_A\ot(P_B+P_{B'})+P_{A'}\ot(P_B-P_{B'})$ with $P_{A,B,A',B'}\in\{X,Y,Z\}$, a state $\omega_0=\dyad{00}$ and a unitary Clifford operator $C\in\mathcal{C}$, then:
\ba
\label{eq:clifford_viol}
|\tr{BC\omega_0 C^\dag}|\leq 2.
\ea
Moreover, the same holds if the preparation $\omega_0$ is a mixed stabilizer state $\chi$ (obtained by convex combinations of pure stabilizer states):  $|\tr{B\chi}|\leq2$.
\end{thm}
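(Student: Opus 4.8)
Here is how I would approach the statement.

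\smallskip

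\noindent I would first reduce everything to pure stabilizer states. Since $\omega_0=\dyad{00}$ is a stabilizer state (stabilized by $Z\otimes I$ and $I\otimes Z$) and a Clifford unitary maps stabilizer states to stabilizer states, $C\omega_0 C^\dag=\dyad{\phi}$ is a pure stabilizer state; let $S$ be its stabilizer group. The mixed-state claim then follows immediately from the pure one: if $\chi=\sum_k p_k\dyad{\phi_k}$ with $\dyad{\phi_k}$ pure stabilizer, $p_k\ge0$, $\sum_k p_k=1$, the triangle inequality gives $|\tr{B\chi}|\le\sum_k p_k\,|\tr{B\dyad{\phi_k}}|\le2$. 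The one fact I will invoke about $\ket{\phi}$ is that $\bra{\phi}P\ket{\phi}\in\{-1,0,+1\}$ for every $P\in\pauli{2}$, with value $\pm1$ exactly when $\pm P\in S$ and $0$ otherwise: in the remaining case some $Q\in S$ anticommutes with $P$, and then $\bra{\phi}P\ket{\phi}=\bra{\phi}PQ\ket{\phi}=-\bra{\phi}QP\ket{\phi}=-\bra{\phi}P\ket{\phi}$ (using $Q\ket{\phi}=\ket{\phi}$), forcing it to vanish.

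\smallskip

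\noindent Next I would write $B=M_1+M_2+M_3-M_4$ with $M_1=P_A\otimes P_B$, $M_2=P_A\otimes P_{B'}$, $M_3=P_{A'}\otimes P_B$, $M_4=P_{A'}\otimes P_{B'}$, so that $\bra{\phi}B\ket{\phi}=v_1+v_2+v_3-v_4$ with $v_i:=\bra{\phi}M_i\ket{\phi}\in\{-1,0,1\}$. If $P_A=P_{A'}$ or $P_B=P_{B'}$ then $B=2M_1$ and $|\bra{\phi}B\ket{\phi}|=2|v_1|\le2$, so I may assume $P_A\ne P_{A'}$ and $P_B\ne P_{B'}$. Using that two distinct elements of $\{X,Y,Z\}$ anticommute while equal ones commute, a direct check of the single-qubit factors shows that among $M_1,M_2,M_3,M_4$ (which are then pairwise distinct) only the pairs $(M_1,M_4)$ and $(M_2,M_3)$ commute, and every other pair anticommutes.

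\smallskip

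\noindent The decisive observation is then that if $v_i\ne0$ and $v_j\ne0$ with $i\ne j$, then $\pm M_i\in S$ and $\pm M_j\in S$, and since $S$ is abelian $M_i$ and $M_j$ commute; contrapositively, an anticommuting pair among the $M_i$ cannot both have nonzero expectation on $\ket{\phi}$. Hence $\{i:v_i\ne0\}$ is contained in $\{1,4\}$ or in $\{2,3\}$, so $\bra{\phi}B\ket{\phi}$ equals either $v_1-v_4$ or $v_2+v_3$; each is a signed sum of at most two numbers from $\{-1,0,1\}$, hence $|\bra{\phi}B\ket{\phi}|\le2$. Together with the first paragraph, this gives the theorem.

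\smallskip

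\noindent The step that needs care will not be any estimate but the bookkeeping that pins down exactly which pairs $M_i,M_j$ commute, combined with noticing that the CHSH sign pattern $(+,+,+,-)$ collapses the surviving contribution to $v_1-v_4$ or $v_2+v_3$: although three of the four $v_i$ could formally be $+1$, the anticommutation structure of $\pauli{2}$ lets at most one commuting pair survive on a stabilizer state. This is the quantitative version of the known fact that stabilizer-state preparations with Pauli (Clifford) measurements cannot violate CHSH, i.e.\ that non-stabilizer resources are necessary.
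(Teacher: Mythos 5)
Your proof is correct and follows essentially the same route as the paper's: expand the stabilizer state over its abelian stabilizer group so that each Pauli term of $B$ contributes $\pm1$ or $0$, observe that in the non-degenerate case only the pairs $(P_A\otimes P_B,\,P_{A'}\otimes P_{B'})$ and $(P_A\otimes P_{B'},\,P_{A'}\otimes P_B)$ commute so at most two terms survive, and handle the degenerate case and mixed states separately. Your bookkeeping of exactly which commuting pair can survive is a slightly more explicit version of the paper's ``at most two terms are nonzero'' step, but the argument is the same.
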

\begin{proof}
We say that the operator $B$ is degenerate if at least two terms in $B$ are equal. One sees that if $B$ is degenerate, then it is of the form $B=2P_A \otimes P_B$ and the result is obvious since $\Vert B_0 \Vert = 2$ in this case and $\left \vert \Tr( B_0 \psi ) \right \vert \le \Vert B_0 \Vert$ for all states $\psi$. Hence, we can assume that $B$ is non-degenerate ($A\neq A'$ and $B\neq B'$). 
To prove the first statement of the theorem, let us first note that $\psi = C \omega_0 C^\dagger$ is a pure stabilizer state and so is of the form
\ba
\psi=\frac{1}{4}\sum_{P\in \mathcal{G}}\phi_{P}P,
\ea
where $\mathcal{G}$ is an abelian subgroup of the Pauli group of dimension four and $\phi_P =\pm 1$.
Now note that $B$ is a sum of four Pauli strings. Using orthogonality of Pauli strings $\tr{PP'}=d\delta_{PP'}$, we obtain
\ba
\label{eq:clifford_pure_dem}
b_C=\tr{BC\omega_0 C^\dag}=
\phi_{P_{AB}} +\phi_{P_{AB'}} + \phi_{P_{A'B}} - \phi_{P_{A'B'}} \ .
\ea
At this point note that there can be at most two Pauli strings in $B$ commuting with each other and thus belonging to the same Abelian subgroup. To see this, note that when $B$ is non-degenerate ($A\neq A'$ and $B\neq B'$), $P_A \otimes P_B$ and $P_{A'} \otimes P_{B'} $ always commute (different Pauli operators anticommute) and any attempt to enlarge this set makes $B$ degenerate.

As a consequence, at most two terms in Eq.~\eqref{eq:clifford_pure_dem} can be different from zero, from which the theorem follows.
As for the second part of the theorem, simply note that if $\chi = \sum_i p_i \chi_i$ with $\chi_i$ pure stabilizer states and probabilities $p_i$ summing to one, 

\begin{equation}  \left|\Tr[B\chi]\right|\le\sum_{i}p_{i}\left|\Tr[B\chi_{i}]\right|\le2.
\end{equation}

\end{proof}
\noindent Therefore, one cannot violate the CHSH inequality with either pure or mixed stabilizer states.

\begin{cor}
If the initial preparation is $\sigma \in\stabzero$, then $C\sigma C^\dag\in\stabzero$ and therefore  $\tilde{M}_2 (\psi)=0 \Rightarrow \left|\Tr[B\psi]\right|\le 2$.
\end{cor}

Let us now show that if we restrict ourselves to the class of operators $B\in\mathcal{B}$ that we call {\it symmetric}, for which $P_A = P_B$ and $P_{A'} = P_{B'}$ (of which $B_0$ is an example) and the unitary preparation $U$ is symmetric in $A$ and $B$, there cannot be a violation either. 
\begin{thm}
\label{theorem2}
Let ${\mathcal E}$ be a  CPTP map ${\mathcal E}: {\mathcal L} ({\mathcal H}_{AB}) \rightarrow {\mathcal L} ({\mathcal H}_{AB})$ commuting with the swap operator $T_2$, that is, ${\mathcal E} \left(T_2\rho T_2^\dagger\right) = T_2 {\mathcal E}(\rho) T_2^\dagger$; then, for the symmetric resource-free CHSH operator $B_0$, it holds
\ba
\left| \tr{(B_0{\mathcal E} (\omega_0)} \right|\le 2.
\ea
\begin{proof}
Since $[\omega_0,T_2]=0$, we have $T_2 {\mathcal E}(\omega_0) T_2^\dagger = {\mathcal E} (T_2 \omega_0 T_2^\dagger)=  {\mathcal E} (\omega_0)$. This state must be of the form ${\mathcal E} (\omega_0)=p_{\rm sym}\omega_{\rm sym}+p_{asym}\dyad{\psi^-}$, where $\omega_{\rm sym}$ is entirely supported on the symmetric subspace $1/2(I_A\otimes I_B + T_2) {\mathcal H}_{AB}$ and it is therefore of the form  
\ba
\omega_{\rm sym} = \sum_i p_i \ket{\phi_i}\bra{\phi_i},
\ea
with $ \ket\phi_i=a_i\ket{00}+b_i\ket{11}+c_i\ket{\psi^+}$, $\ket{\psi^+}=\ket{01}+\ket{10}/{\sqrt{2}}$ and $|a|^2+|b|^2+|c|^2=1$. The anti-symmetric subspace is instead spanned by the state $\ket{\psi^-}=\ket{01}-\ket{10}/\sqrt{2}$. By triangular inequality we obtain $|\tr{B_0\mathcal{E}(\omega_0)}|\leq p_{\rm sym}|\tr{B_0}\omega_{sym}|+p_{\rm asym}|\tr{B_0\dyad{\psi^-}}|$. Thus, in order to prove our statement we need to bound both expectation values, the one on the symmetric subspace, and the one on the anti-symmetric subspace. For the symmetric subspace, exploiting again the triangular inequality, we have $\left| \tr{(B_0\omega_{\rm sym}} \right|\le \sum_i p_i | \tr{B_0 \dyad{\phi_i}}|$. We need therefore to bound this expectation value for an arbitrary pure state in the symmetric subspace. Direct evaluation of the expectation value of $B$ for symmetric resource free CHSH operators yields
\begin{equation}
    \left|\tr{B U_{\rm sym}\ket{00}\bra{00}U_{\rm sym}^\dag}\right|=\Big \{ 2 |c|^2, |a+b|^2, |a-b|^2\Big \} \leq 2 \, , 
\end{equation}
where we used $ \Vert \mathbf{x} \Vert_1  \le \sqrt{2} \Vert \mathbf{x} \Vert_2$ for $\mathbf{x}=(a,b)$.

On the other hand, the state $\ket{\psi^-}$ is a stabilizer state, so that by Theorem~\ref{theorem1}, its expectation value is also bounded by 2. This concludes the proof.
\end{proof}
\end{thm}
\noindent
The above theorems show that Bob and Alice must measure in directions that are tilted with respect to each other, and that this tilt must be a non Clifford rotation.

Finally, we show  structural relations between non-locality and  non-stabilizerness in the CHSH violations.
Since violations of the CHSH are connected with non-local behavior, one would naively expect non-local magic to play a major role in CHSH violations. It turns out that not only this is not the case, but it is actually the opposite: non-local magic is detrimental for the maximal violation of CHSH inequality. Indeed, one can prove that non-local non-stabilizerness bounds the maximal achievable violation, i.e. the larger $M_{\rm NL}$ the lower the maximal violation. Moreover, we are going to see that in order to violate the CHSH inequality, one needs the operator $U$ to inject some \textit{local} non-stabilizerness, defined as the difference between the total $M_2$ and the non-local one:
\begin{equation}
    M_{\rm LOC} (|\psi\rangle ) := M_2 (|\psi\rangle ) - M_\mathrm{NL}(|\psi\rangle ) \ge 0. 
\end{equation}

Let us now prove that $M_{\rm NL}$ bounds the maximal violation.

\begin{thm}\label{TheoremI}
For a general $2-$qubit state $\ket{\psi(\theta)}$ with Schmidt coefficients $\cos\left(\theta\right), \sin\left(\theta\right)$ and the Bell operator $B=P_A\ot(P_B+P_{B'})+P_{A'}\ot(P_B-P_{B'})$ with $P_{A,B,A',B'}\in\{X,Y,Z\}$, it holds:
\ba
 C(\theta) :=\max_{U_A \otimes U_B}  \left| \tr{B_{U_A \otimes U_B}\dyad{r(\theta)}}\right| \leq 2 \sqrt{2} - \frac{\ln(2)}{\sqrt{2}} M_{NL}(\theta).
    \label{eq:ineq_B_M_NL}
\ea
where $B_{U_A\ot U_B}\equiv(U_A^\dag\ot U_B^\dag)B(U_A\ot U_B)$ and $M_{\rm NL}(\theta)=\min_{U_A\ot U_B} M_2(U_A\ot U_B\ket{r(\theta)})$.
\begin{proof}
While the Bell operator $B$ is defined in terms of specific Pauli measurements, the maximization over local unitaries $U_A \otimes U_B$ ensures that $C(\theta)$ corresponds to the maximal violation of the CHSH inequality allowed by the state's correlation matrix. By the Horodecki criterion \cite{Horodecki1995}, this value is a function solely of the Schmidt coefficients, allowing for a tight comparison with the local-unitary invariant $M_{NL}(\theta)$. In particular:
\begin{align}
    C(\theta) = 2\sqrt{1+\sin^2(2 \theta)}.
\end{align}
By defining the parameter $t:=\sin^2(2\theta) \in [0,1]$, we have:
\begin{align}
    C(t) = 2 \sqrt{1+t},\qquad M_{NL} (t) = -\log\left( 1-t+t^2\right),
\end{align}
where the expression for $M_{NL}$ follows from Eq.~\eqref{eq:m_nl_def}.
To determine the optimal coefficient $\alpha$ such that the inequality: 
\begin{align} \label{Eq. C(t)Inequality}
    C(t) \leq 2\sqrt{2} - \alpha M_{NL}(t),
\end{align}
holds for all $t$, we define the ratio:
\begin{align} \label{Eq. C(t)InequalityEquiv}
    \alpha \geq \frac{2\sqrt{1+t}-2\sqrt{2}}{\log(1-t+t^2)} =: g(t).
\end{align}
We investigate the asymptotic behavior of $g(t)$ in the limit of maximal entanglement ($t \rightarrow 1$). By defining $t=1-\epsilon$ and applying a Taylor expansion for the logarithmic and square-root terms~(see App.~\ref{appendixxx}), we find:   
\begin{align}
    \alpha_{\text{opt}} = \lim_{t \rightarrow 1} g(t) =\frac{\ln(2)}{\sqrt{2}}.
\end{align}
The sufficiency of the optimal coefficient, ensuring that the inequality remains satisfied across the entire domain $t \in [0,1]$, is shown in  App.~\ref{appendixxx} via a monotonicity study of the associated gap function, alongside the complete algebraic derivation of the functional forms.  
\end{proof}
\end{thm}

Theorem~\ref{TheoremI} bears several consequences. First, as expressed by the following corollary, the presence of the smallest amount of non-local non-stabilizerness hinders the possibility of achieving the Tsirelson bound.

\begin{cor}
\label{thm:no_tsirelson}
Given a state $\psi$ such that ${M}_{\rm NL}\neq0$, then
\ba
\left|\tr{B\psi}\right|<2\sqrt{2}.
\ea
\begin{proof}
It's a trivial consequence of Theorem~\ref{TheoremI}.
\end{proof}
\end{cor}
\noindent Thus, injecting non-local non-stabilizerness is actually detrimental for non-local violations.

To conclude, we now prove that \textit{local} non-stabilizerness is necessary in order to violate the CHSH inequality.

\begin{thm}
\label{theorem4}
If a unitary operator $U$ does not inject any local magic,  that is $M_{\rm LOC}(U\ket{00})=0$, then
\ba
|b_U|=\left|\tr{B U\omega_0U^\dag}\right|\leq2.
\ea
\begin{proof}
If there is zero local magic, the state $U\ket{00}$ has the form \cite{qian2025quantumnonlocalmagic}
\ba
\label{eq:r_theta}
U\ket{00}& =& C_A\otimes C_B ({\cos\left(\theta\right)}\ket{00} + {\sin\left(\theta\right)}\ket{11} ) \equiv C_A\otimes C_B (\ket{r(\theta)}) 
\\
&=& \cos\left(\theta\right) \ket{ss'} +\sin\left(\theta\right) \ket{\bar{s}\bar{s}'},
\ea
where $\{\ket{s}, \ket{\bar{s}} \}$ is a basis for Alice of eigenstates of Pauli operators and similarly for Bob, and $C_A,C_B$ are local Clifford unitaries, which map $\ket{00}$ and $\ket{11}$ into the tensor product of eigenstates of other single qubit Pauli operators. Since $B=C_{A'}^\dagger \otimes C_{B'}^\dagger B_0 C_{A'} \otimes C_{B'}$, it is sufficient to check the statement for $B_0$. One can then directly verify that for all possible combination of eigenstates of $\{X,Y,Z\}$ the inequality holds, proving the theorem.
\end{proof}
\end{thm}
\subsection{Probes of magic and non-locality}

Let us now show an example of the  very rich interplay between entanglement $E_{\mathrm{VN}}(|\psi\rangle ) = S_1 (\psi_A)$, magic $M_{2}(\ket\psi)$, non-local magic $M_{\rm NL}(|\psi\rangle )$,  the possibility of violation of CHSH inequality, and its maximum entity. Remarkably, non-local magic takes into account both entanglement and non-stabilizerness as factorized states have obviously zero non-local magic. As evaluating non-local magic is, in general, a daunting task, we also study the capacity of entanglement $C_E$ as a quantity that has some properties in common with it and can serve as a probe. 
$C_E(|\psi\rangle)$ is defined as
\begin{align}
    C_E(|\psi\rangle ) :=\left\langle(\log \psi_A)^2\right\rangle_{\psi_A}-\left\langle\log \psi_A\right\rangle_{\psi_A}^2 
    = \Tr\left[\psi_{A}\left(\log\psi_{A}\right)^{2}\right]-\left(\Tr[\psi_{A}\log\psi_{A}]\right)^{2} \, .
\end{align}
The entanglement capacity is a measure of how much the reduced state is {\it non-flat}, i.e.~it measures how much $\psi_A$ deviates from being proportional to a projector~\cite{PRXQuantum.3.010325,8765829,7001656}.
Moreover, the entanglement capacity, in contrast with the most common entanglement measures, such as the entanglement entropy or the concurrence, is zero not only for separable states, but also for maximally entangled states. Moreover, the maximum of the entanglement capacity is reached for partially entangled states~\cite{Wei_2023,nandy_capacity_2021}. The entanglement capacity is also a measure of the fluctuations of the entanglement Hamiltonian~\cite{okuyama_capacity_2021}. Its connection with $M_{NL}$ is in the fact that $C_E(\psi)=0$ iff $\psi$ has zero non-local magic~\cite{cao_gravitational_2024}.
The $C_E$  has found applications in many-body systems, connecting thermodynamic quantities and the Rényi entropies~\cite{PhysRevLett.105.080501,PhysRevB.83.115322}, and the AdS/CFT correspondence~\cite{Dong2016,PhysRevLett.122.041602,cao_gravitational_2024}, where it has a relatively simple bulk interpretation given by metric fluctuations integrated over the Ryu-Takayanagi surface, i.e. the entangling surface. 

Let us start with the family of states defined by 
\ba
  \ket{\rho} &=& \sqrt{\tfrac{1+r}{2}}\,\cos\!\big(\tfrac{\theta}{2}\big)\,\ket{00}
   +\sqrt{\tfrac{1-r}{2}}\,\sin\!\big(\tfrac{\theta}{2}\big)\,\ket{01} \nonumber\\
   &+&\sqrt{\tfrac{1+r}{2}}\,e^{i\phi}\sin\!\big(\tfrac{\theta}{2}\big)\,\ket{10}
   -\sqrt{\tfrac{1-r}{2}}\,e^{i\phi}\cos\!\big(\tfrac{\theta}{2}\big)\,\ket{11}\\
   &=&    \sqrt{\tfrac{1+r}{2}}\,\ket{\rho_+}_A\ket{0}_B
   +\sqrt{\tfrac{1-r}{2}}\,\ket{\rho_-}_A\ket{1}_B
\ea
with $r\in[0,1]$, $\theta \in[0,\pi]$, $\phi \in [0, 2 \pi)$ and 
\ba
    \ket{\rho_+}&= \cos\left({\frac{\theta}{2}}\right) \ket{0}+ e^{i \phi} \sin\left({\frac{\theta}{2}}\right)\ket{1}\, ,\\
    \ket{\rho_-}&= \sin\left({\frac{\theta}{2}}\right)\ket{0}- e^{i \phi} \cos\left({\frac{\theta}{2}}\right) \ket{1}\, .
\ea
Setting $\theta=\pi/4$ and $\phi=\pi/3$, we obtain
\begin{equation}
   \rho_A=\Tr_B\dyad{\rho}
   =\frac{I+\Vec{r}\cdot\Vec{\sigma}}{2}
   =\frac{1+r}{2}\dyad{\rho_+}+\frac{1-r}{2}\dyad{\rho_-},
\end{equation}
The evaluation of the figures of merit gives
\ba
S_1(\rho_A) &=& 1-\frac{1}{2}\Big[(1+r)\log(1+r)+(1-r)\log(1-r)\Big] \\
C_E(\rho_A)&=&\frac{1-r^2}{4}\,\log^2\!\Big(\frac{1+r}{1-r}\Big)\\
   M^{NL}({\rho})&=&-\log\!\big(1-r^2+r^4\big)\\
 \tr{B_0 \dyad{\rho}}  &=& \frac{3}{8}\Big[(2+\sqrt2)\sqrt{1-r^2}+2\sqrt2\Big]
\ea
\begin{figure}[!t]
    \centering
    \includegraphics[width=0.7\linewidth]{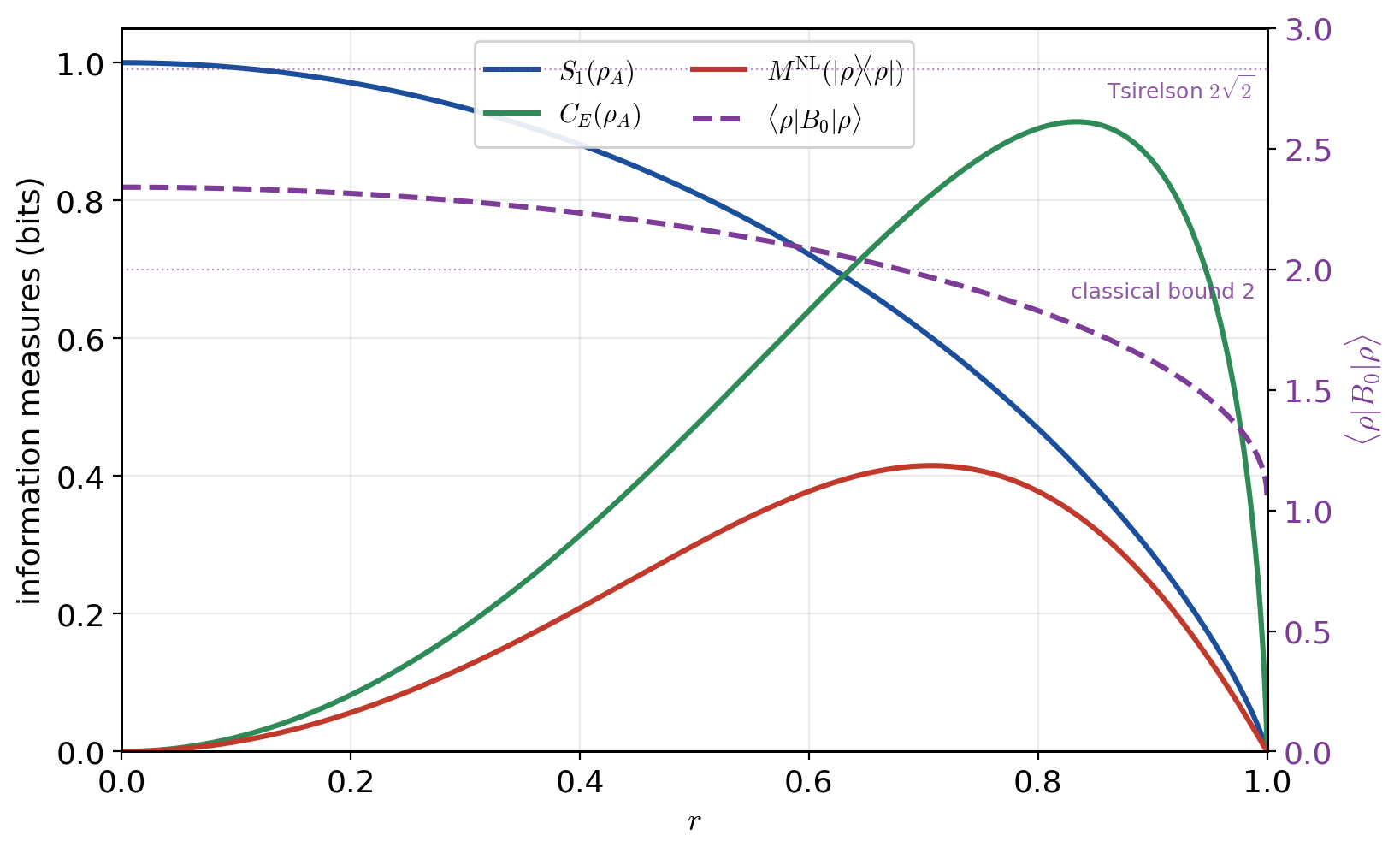}
    \caption{ Plot of the entanglement entropy $S_1(\rho_A)$ (magenta solid line), the capacity of entanglement $C_E(\rho_A)$ (dashed indigo line), the non-local non-stabilizerness $M_{NL}(\rho)$ (dotted orange line) and of $\Tr[B_0\dyad{\rho}]$ as a function of $r$, having set $\theta=\pi/4$ and $\phi=\pi/3$ (dotted-dashed green line). For $r=0$ the corresponding state is a combination of all four computational basis states, $M_{\rm NL}=0$ as well as $C_E$, while entanglement is maximum. Under these conditions, the CHSH inequality (dashed red line) is violated. One can then observe the decrease of the CHSH violation as $C_E$ and $M_{\rm NL}$ grow, up to a point where no violation is observed, in spite of the state being still highly entangled, as proven by the non-zero value of $S_1$. Finally, for $r=1$, both $M_{\rm NL}$ and $C_E$ are once again zero, but it is still not possible to violate the CHSH inequality because of the missing entanglement, since $S_1=0$.}
    \label{fig:allquantities}
\end{figure}

The results are summarized in Fig.~\ref{fig:allquantities}. First of all, we see how the qualitative similar trend (and same values at the boundaries) makes $C_E$ a good probe for $M_{\rm NL}$. Unfortunately, for two qubits they are not strictly monotone with each other. In this family one has maximal CHSH violation for  $r=0$  while the non-local magic and the entanglement capacity are both zero. As $r$ grows, the expectation value of the Bell operator $B_0$ decreases, while the non-local magic increases, showing how non-local non-stabilizerness may hinder the violation of the CHSH inequality. Beyond a critical value of $r$, CHSH violations are not observed anymore, as there is too much non-local magic. Notice also that at the critical value of $r$ for which violations are not observed anymore, the state is still entangled, as shown by the entanglement entropy.

\subsection{CHSH geometry}
Let us now try to get some geometric understanding regarding  the region of pure states that violate the CHSH inequality. We start by going to the eigenbasis of the operator
$
B_{0}=\sum_{i}\lambda_{i}|\phi_{i}\rangle\langle\phi_{i}|,  
$
where we ordered the eigenvalues as $\lambda_{i}=\left\{ -2\sqrt{2},2\sqrt{2},0,0\right\} $.
A generic pure state $|\psi\rangle$ in this eigenbasis reads $|\psi\rangle=\sum_{i}\psi_{i}|\phi_{i}\rangle$.
The condition $\left|\langle\psi|B_{0}|\psi\rangle\right|>2$ does not identify a vector subspace. 
Indeed, it can be rewritten as:
\ba
\label{eq:viol_cond}
\left|\langle\psi|\left(\sum_{i}\lambda_{i}|\phi_{i}\rangle\langle\phi_{i}|\right)|\psi\rangle\right|  >2
\Leftrightarrow\left|\left(\sum_{i}\lambda_{i}\left|\psi_{i}\right|^{2}\right)\right|  >2
\Leftrightarrow\,\left|\left|\langle\phi_{1}|\psi\rangle\right|^{2}-\left|\langle\phi_{2}|\psi\rangle\right|^{2}\right|>\frac{1}{\sqrt{2}}.
\ea
Let us now  write the state $\ket{\psi}$ in this basis according to the Hurwitz parametrization~\cite{Todd_Tilma_2002} of a general pure state:
\begin{align}
\left\{ \psi_{i}\right\} _{i=1}^{4} = &\Big(\cos\left(\vartheta_{3}\right),\sin\left(\vartheta_{3}\right)\cos\left(\vartheta_{2}\right)e^{i\phi_{3}}, \nonumber \\ & \ \sin\left(\vartheta_{3}\right)\sin\left(\vartheta_{2}\right)\cos\left(\vartheta_{1}\right)e^{i\phi_{2}},
\sin\left(\vartheta_{3}\right)\sin\left(\vartheta_{2}\right)\sin\left(\vartheta_{1}\right)e^{i\phi_{1}}\Big),
\end{align}
where $\vartheta_{i}\in\left[0,\pi/2\right]$ and $\phi_{i}\in[0,2\pi]$ are the six parameters describing a two qubit pure state. Inserting this parametrization into Eq.~\eqref{eq:viol_cond} one obtains:
\ba
\left|\cos^{2}\left(\vartheta_{3}\right)-\sin^{2}\left(\vartheta_{3}\right)\cos^{2}\left(\vartheta_{2}\right)\right|>\frac{1}{\sqrt{2}}.   
\ea
Thus, in the end only two parameters enter the violation of the CHSH inequality, allowing for a graphical representation, as shown in Fig.~\ref{fig:Bell_violation}, where the density of states violating the CHSH inequality is shown in the plane $\{\theta_2,\theta_3\}$. The main message of this short digression is that the region of pure states violating the CHSH inequality is non-trivial, and moreover the region of maximal violation has very little weight as can be seen from Fig.~\ref{fig:Bell_violation}. 

\begin{figure}[!t]
\begin{subfigure}[t]{0.44\linewidth}
\centering
\includegraphics[width=\linewidth]{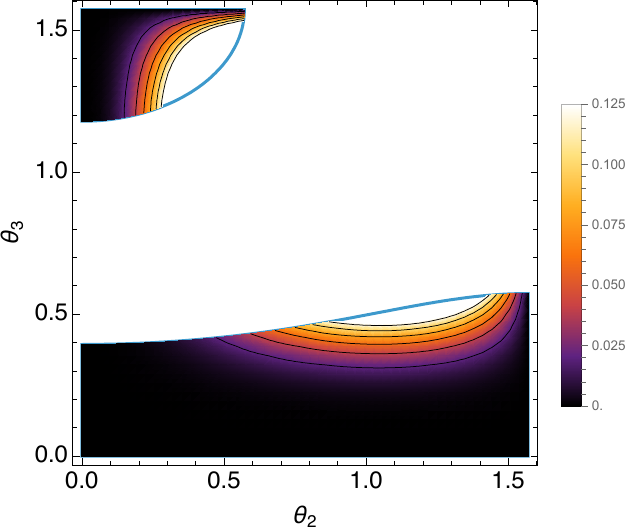}
\end{subfigure}
\begin{subfigure}[t]{0.44\linewidth}
\centering
\includegraphics[width=\linewidth]{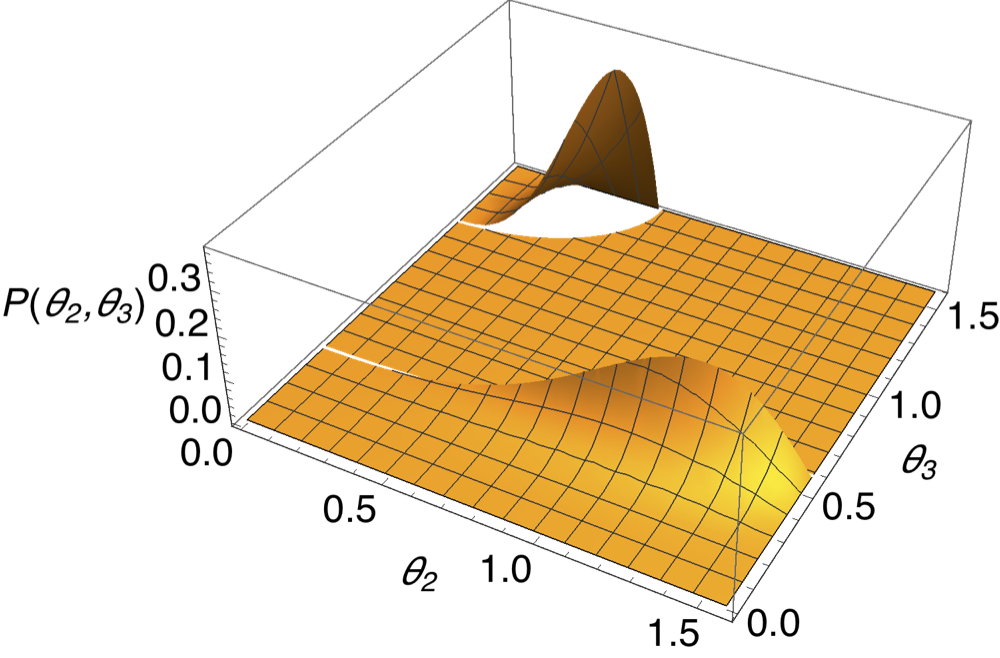}
\end{subfigure}
\caption{Density of states violating the CHSH inequality in the $\{\theta_2,\theta_3\}$ plane.
The region of violation of the Bell's inequality. Left panel: density plot, the color code corresponds
to the density of states in this chart. Right panel: 3D plot. Note that in these coordinates the maximal violation corresponds to the point $(\theta_2,\theta_3)=(0,\pi/2)$ and the line $(\theta_2,\theta_3)=(\theta,0)$ for $\theta\in [0,\pi/2]$. \label{fig:Bell_violation} }
\end{figure}
\section{Random non-locality\label{sec:random}}
In this section, we analyze the probability of violating the CHSH inequality when the state-preparing unitary $U$ is taken from an ensemble $\mathcal E_U$ of unitaries with respect to a measure $d\mu_U$. The ensemble $\mathcal E_U$ represents a lack of control on the preparation unitary $U$. Then we ask which ensembles are more likely to provide a violation. The choice of the ensembles $\mathcal E_U$ is of course in principle experimentally motivated, however, within the same experimental capabilities one could have access to different ensembles $\mathcal E_U$. The theorems and facts of the previous section have shown that certain ensembles of unitaries would be useless: obviously, the ensemble of factorized unitaries $\mathcal E_{U_A\otimes U_B}$, the set of Clifford unitaries $\Cl$, but also the symmetric unitaries $U_\mathrm{sym}$ considered in Theorem \ref{theorem2} and the non-local unitaries that do not produce local magic $\mathcal{E}_{U_{NL}}:=\{ U |\; M_\mathrm{LOC}(\omega_U)= 0\}$. This suggests that we can improve the chances of violating the CHSH inequality making use of the structure of $U$.

\subsection{CHSH in the Hilbert space}
We begin by considering $\mathcal E_U$ as the full unitary group endowed with the uniform (Haar) measure $d\mu_H$. In this extreme case, one has zero control whatsoever on the state preparation. We are asking what is the likelihood of violating the CHSH inequality for a completely random unitary $U$. First, we can obtain a rough estimate of this probability using Chebyshev's inequality, once the mean and standard deviation of the distribution of $b_U$ are known.  For the mean, using standard techniques~\cite{Mele2024introductiontohaar}, we obtain
\ba
\langle b_U\rangle_U=\left\langle\tr{B_0U\omega_0U^\dag}\right\rangle_U=\int_{\cal U}\,d\mu_H\tr{B_0U\omega_0U^\dag}
=\frac{1}{4}\tr{B_0}=0 \, .
\ea
Thus, on average, using a random uniformly distributed unitary $U$, one obtains zero as a result of the CHSH experiment. With the same techniques, one obtains for the variance $\Var_U(b_U)=4/5$ and thus, by the Chebyshev inequality, 
$\mathrm{Prob}(|b_U>2|)\leq 1/5$.
As we shall  see, this upper bound is very loose.  Indeed, in case of the full unitary group equipped with the Haar measure, it is possible to compute  the
probability of violating the CHSH inequality exactly using the results in  \cite{campos_venuti_probability_2013}. 

To this end, we need the probability distribution of obtaining a given outcome $x$ in the CHSH experiment, that is:
\ba\label{allisbU}
P_{B_0}(x) := \left\langle \delta(b_U-x)\right\rangle_\mathcal{U} \, . 
\ea
 The probability of violation is simply obtained integrating this probability distribution over the values corresponding to a violation, that is:
\ba
P_{\mathrm{viol}} & =\int_{|x|>2}dx P_{B_0}(x).
\label{eq:integral_violation}
\ea
Notice that, as shown in \cite{campos_venuti_probability_2013},  $P_{B_{0}}(x)$ is entirely determined by the spectrum of $B_0$, $\left\{ -2\sqrt{2},0,0,2\sqrt{2}\right\} $, including degeneracies. Hence, the result is the same for all the non-degenerate CHSH operators in $\mathcal{B}$ as expected (as it is well known that they are isospectral). 
To obtain $P_{B_{0}}(x)$ one can use the explicit formula Eq.~(25) in \cite{campos_venuti_probability_2013} for degenerate eigenvalues or lift the degeneracy of the zero eigenvalue to $-\epsilon,+\epsilon$, use the more manageable Eq.~(17) in \cite{campos_venuti_probability_2013}, and send $\epsilon\to0$ at the end. The result is
\begin{align}
\nonumber
P_{B_0}(x)  &=\frac{3\left(2\sqrt{2}+x\right)^{2}\sign\left(2\sqrt{2}+x\right)}{64\sqrt{2}}+\frac{3\left(2\sqrt{2}-x\right)^{2}\sign\left(2\sqrt{2}-x\right)}{64\sqrt{2}}-\frac{3}{8}|x| \\
& = \frac{3}{64}\left(8\sqrt{2}+\left|x\right|\left(\sqrt{2}\left|x\right|-8\right)\right)\1_{[-\sqrt{8},\sqrt{8}]}(x), 
\end{align}
where $\1_{A}(x)$ is the characteristic function of the set $A$. 
Computing the integral in Eq.~(\ref{eq:integral_violation}) we arrive to the following.
\begin{proposizione}
Given a Haar random unitary $U$, the probability of the state $\ket{\psi}=U\ket{00}$ to violate the CHSH inequality is:
\ba
P_{\mathrm{viol}}=2.51\%.
\ea
\end{proposizione}
Note that the estimate obtained using the Chebyshev's bound is almost ten times larger than the actual probability.

\begin{figure}[!t]
    \centering
    \begin{subfigure}[t]{0.44\linewidth}
        \centering
        \includegraphics[width=\linewidth]{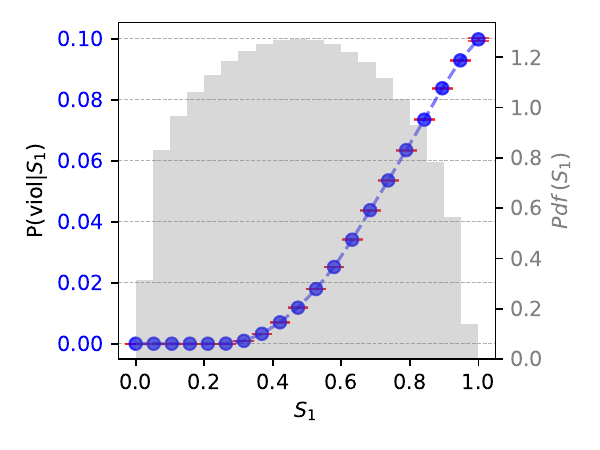}
        \caption{Probability of violating the CHSH inequality vs the entanglement entropy $S_1(\rho_A)$.}
        \label{fig:violations_entanglement}
    \end{subfigure}
    \begin{subfigure}[t]{0.44\linewidth}
        \centering
        \includegraphics[width=\linewidth]{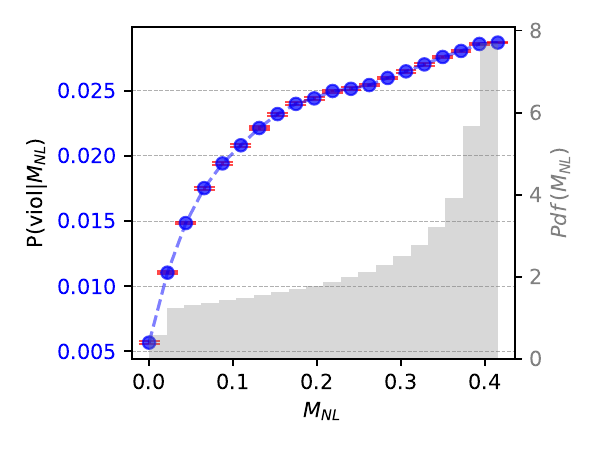}
        \caption{Probability of violating the CHSH inequality vs the non-local non-stabilizerness $M_{\rm NL}(\ket{\psi})$.}
        \label{fig:violations_m_nl}
    \end{subfigure}

    \begin{subfigure}[t]{0.45\linewidth}
        \centering
        \includegraphics[width=\linewidth]{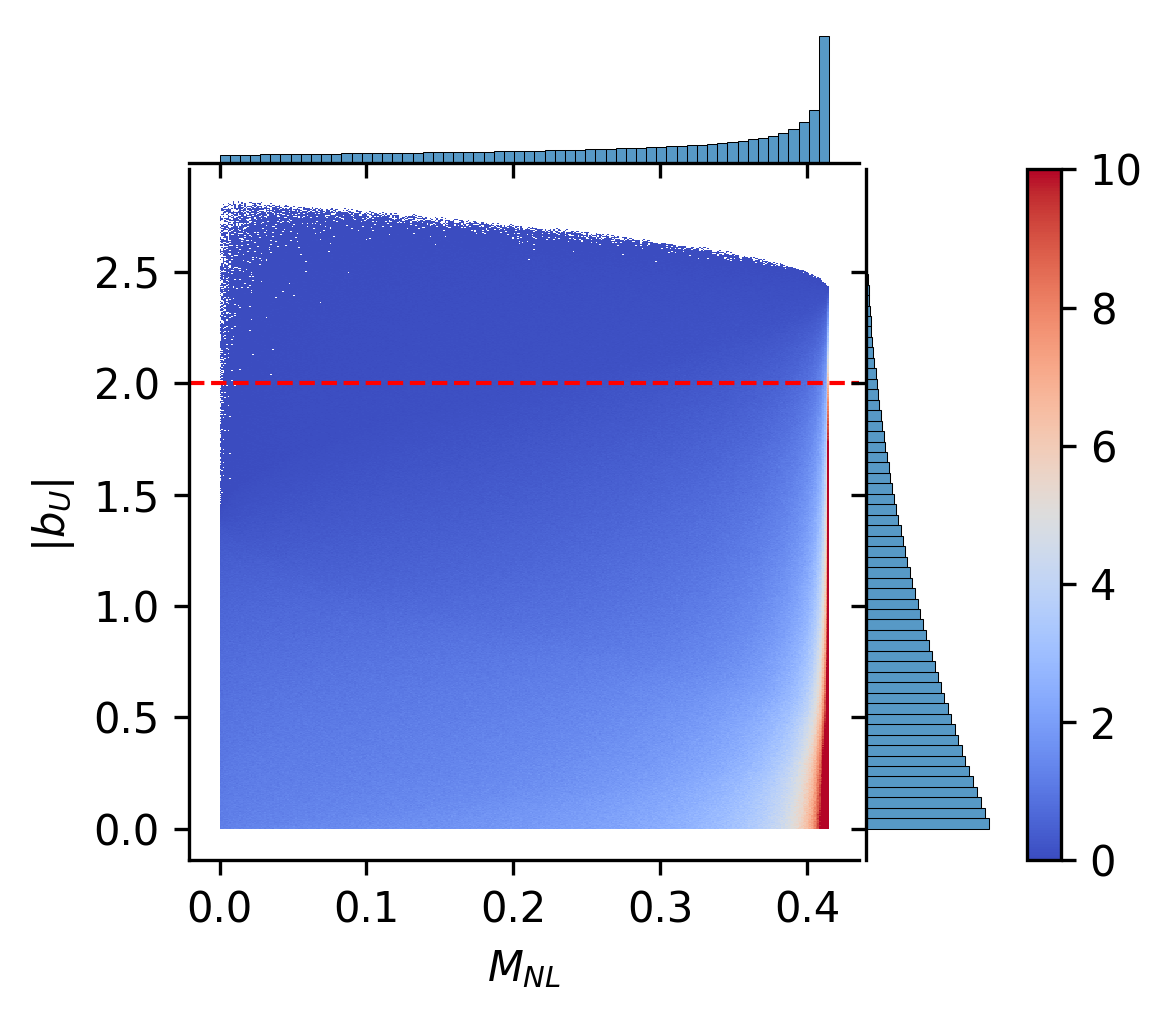}
        \caption{Density of states in the $|b_U|, M_{\rm NL}$ plane. Darker regions correspond to a higher density. On the axes, the marginal probability distributions of the respective quantities.}
        \label{fig:joint_MNL_Bell_exp_val}
    \end{subfigure}
    \begin{subfigure}[t]{0.44\linewidth}
        \centering
        \includegraphics[width=\linewidth]{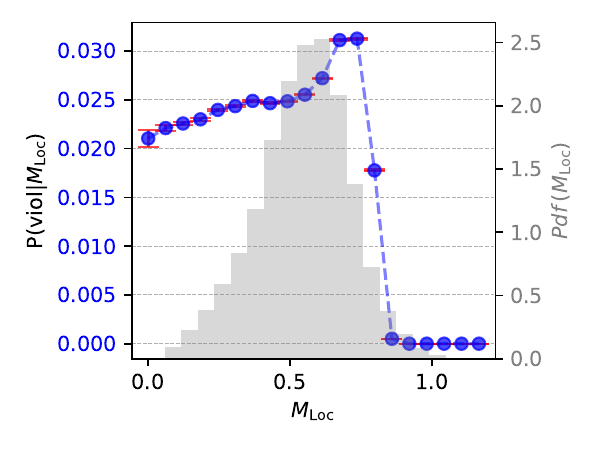}
        \caption{Probability of violating the CHSH inequality vs the local non-stabilizerness $M_{\rm LOC}(\ket{\psi})$.}
        \label{fig:violations_ml}
    \end{subfigure}
    
    \caption{We sample \(5 \times 10^7\) Haar-random two-qubit states and plot in blue the probability of violating the CHSH inequality as a function of (\ref{fig:violations_entanglement}) the entanglement entropy \(S_1\), (\ref{fig:violations_m_nl}) non-local non-stabilizerness \(M_{NL}\), and (\ref{fig:violations_ml}) local non-stabilizerness. Red error bars indicate the standard error in each bin, while the marginal probability distribution of the respective resources is shown in grey.
    In (\ref{fig:joint_MNL_Bell_exp_val}), we plot the joint distribution of $|b_U|$ and $M_{NL}$. Here, the maximum of $|b_U|$ is seen to be monotonously decreasing in $M_{NL}$ and the density of states justifies the results of Fig.~\ref{fig:violations_m_nl}, see discussion in main text.
    }
    \label{fig:violations_all}
\end{figure}

\begin{figure}[!t]
    \centering
    \begin{subfigure}[t]{0.43\linewidth}
        \centering
        \includegraphics[width=\linewidth]{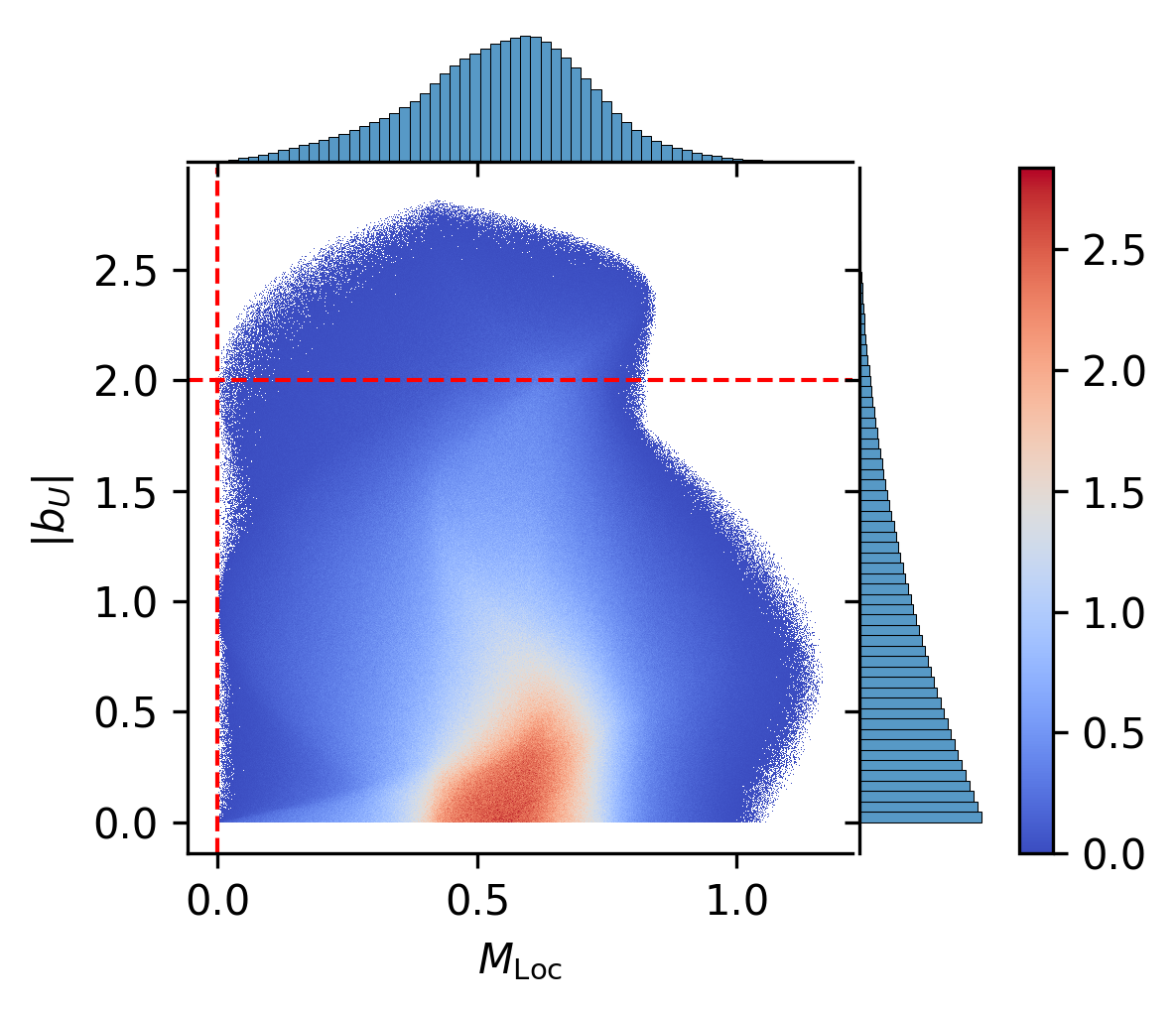}
        \caption{Density of states in the $|b_U|, M_\mathrm{LOC}$ plane. Darker regions correspond to a higher density. On the axes, the marginal probability distributions of the respective quantities. }
        \label{fig:density_states}
    \end{subfigure}
    \begin{subfigure}[t]{0.4\linewidth}
        \centering
        \includegraphics[width=\linewidth]{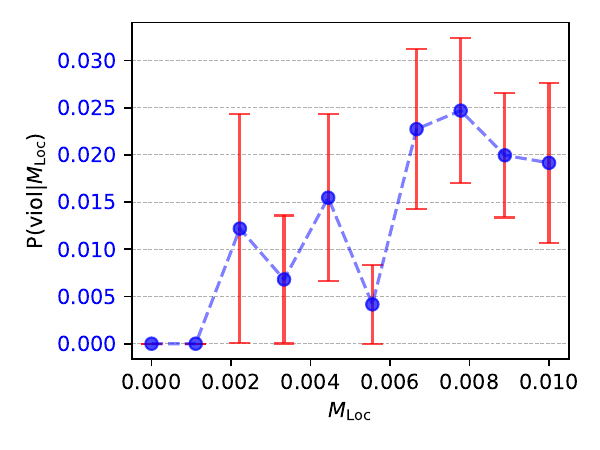}
        \caption{Fine-grained binning near the origin for the conditional probability of violation given $M_\mathrm{LOC}$.}
        \label{fig:violations_m_l_small}
    \end{subfigure}
    
    \caption{To understand the behavior of $P(\mathrm{viol}|M_\mathrm{LOC})$ for small $\mathrm{L}$, in  Fig.~\ref{fig:density_states}  we plot the density of states in the $|b_U|, M_{\rm NL}$ plane, and in Fig.~\ref{fig:violations_m_l_small} the probability of violations for states with small values of $M_{\rm LOC}$. One can observe how states with high values of $M_{\rm LOC}$ do not violate the CHSH inequality, because of the contemporary presence of non-local non-stabilizerness. In Fig.~\ref{fig:violations_m_l_small} one can see that the probability of violations tends to zero when $M_{\rm LOC} \to 0$.
    }
    \label{fig:ultima}
\end{figure}

To obtain a more detailed understanding, we analyze numerically the probability of violating the CHSH inequality given a fixed amount of resources contained in the state, let them be entanglement or non-stabilizerness. 
In practice, we compute numerically  the conditional probabilities\footnote{In general, to avoid situations like the Kolmogorov-Borel paradox, the conditional probability for continuous variables, corresponding to events with probability zero, must be defined as a limiting procedure. This problem does not arise in our discretized numerical simulations. On the contrary Eq.~(\ref{eq:cond_prob}) can be seen as the limit of our numerics when the number of samples $\to \infty$ and the size of the bins $\to 0 $. }
\begin{equation}
    P_{B|Y} (b,y) = \frac{\langle \delta (b_U - b) \delta(Y_U-y) \rangle_\mathcal{U}}{P_Y(y)} \, , \label{eq:cond_prob} 
\end{equation}
where $Y$ is a given resource, and $P_Y(y)$ its  density. 
The probability of violation given fixed resources $Y=y$ is then
\be
\mathrm{Prob}(\mathrm{violation} |Y=y ) = \int_{|b|>2} db \,  P_{B|Y} (b,y) \, .
\ee
We perform our analysis for $Y$ equal to the entanglement entropy $E_\mathrm{VN}$, the non-local magic $M_{\rm NL}$, and the local non-stabilizerness $M_{\rm LOC}$. The results are shown in Fig.~\ref{fig:violations_all}.

Fig.~\ref{fig:violations_entanglement} expresses the known fact that one needs a finite amount of entanglement to violate CHSH inequality.
In Fig.~\ref{fig:violations_m_nl} we show the conditional probability of violating the CHSH inequality at given values of the non-local non-stabilizerness. The results shown in this plot, lead us to observe the following fact.
\begin{fact}
\label{fact:non_local_probability}
States with large amounts of non-local non-stabilizerness are likely to violate the CHSH inequality.
\end{fact}
A naive interpretation of ~\ref{fact:non_local_probability} would lead one to think of a contradiction with Theorem~\ref{thm:no_tsirelson}, since the probability of violation increases with $M_{\rm NL}$. However, this can be explained by the fact that local and non-local magic of Haar-random states are not independent: states with high amounts of non-local magic will also possess local magic, and so the probability of violation increases. However, because of the presence of non-local magic, the violations are small, i.e.~$|b_U|$ is slightly above 2 and way below the Tsirelson bound. This interpretation is strengthened by the plot in Fig.~\ref{fig:joint_MNL_Bell_exp_val} where we show the density of states in the $|b_U|, M_{\rm NL}$ plane, leading to the next fact.
\begin{fact}
For small values of $M_{\rm NL}$ there is a low density of states violating the CHSH inequality. These can reach higher values of the violation. As $M_{\rm NL}$ increases, the maximum value of $|b_U|$ decreases, in accordance with Eq.~(\ref{eq:ineq_B_M_NL}).
\end{fact}

Finally, in Fig.~\ref{fig:violations_ml} we plot the probability of violations given $M_\mathrm{LOC}$. One can observe that the probability is non monotone with respect to local non-stabilizerness confirming the non-trivial interplay between local and non-local magic, since states with high amounts of local non-stabilizerness are constrained to have also large amounts of non-local non-stabilizerness, hindering the possibility of non-local violations. Note that, according to Theorem~\ref{theorem4}, $P_\mathrm{viol}=0$ when $M_\mathrm{LOC}=0$. The behavior of $P(\mathrm{viol}| M_\mathrm{LOC})$ for small $M_\mathrm{LOC}$ is detailed in Fig.~\ref{fig:ultima}  where it is confirmed that $P_\mathrm{viol}\to 0$ when $M_\mathrm{LOC} \to 0$.

\subsection{Isospectral twirling and ensembles of unitary operators}

Here we provide a systematic way to obtain a useful heuristic for the CHSH violation with limited control. The strategy is the following: we compute analytically the first two moments of the distribution of $b_U$ given by $d\mu_U$. Using the Chebyshev inequality, we argue about the most promising ensembles, i.e.~those that, according to the inequality, give the largest probability of violation. Then, numerically, we verify if the promising ensembles do indeed (mostly) provide a better likelihood for a violation. To this end, we will employ the technique of isospectral twirling~\cite{10.21468/SciPostPhys.10.3.076, leone_isospectral_2021,cusumano2026probeschaoscliffordgroup} that has been developed to model situations where one has good control over the eigenvalues of a unitary $U$ but limited control over its eigenstates.
We use this technique to construct ensembles $\mathcal E$ of unitaries that provide higher probability of violating the CHSH inequalities. We will construct the ensembles by utilizing insights given by structural properties of $U$ informed by the theorems of the previous sections.
We first define the ensemble $\mathcal E\equiv 
\{ g^\dag U_c g | g\in \mathcal{G} \}$ associated to a core unitary $U_\mathrm{c}$ and  $\mathcal G\subseteq\mathcal U(d)$ being a subgroup of the full unitary group $\mathcal U(d)$. From now on, the unitary $U_\mathrm{c}$ fixing the spectrum will be called the {\it core} and the group $\mathcal{G}$ will be explicitly denoted in the average operation $\langle \cdot\rangle_{\mathcal G}$. Given the core operator $U_\mathrm{c}$, the ensemble $\mathcal{E}$ consists of operators with the same eigenvalues as $U_\mathrm{c}$ but with eigenvectors determined by the action of $\mathcal{G}$ on $U_\mathrm{c}$.  One can think that isospectral twirling mimics a situation where an experimenter tries to prepare the core unitary $U_\mathrm{c}$ and achieves in preparing the ensemble $\mathcal{E}$ due to the effect of noise. For each given $\mathcal{G}$ we will compute the mean and the variance of the distribution of $b_U$ over $\mathcal E$, namely $\langle b_U\rangle_{\mathcal G}$ and $\Var_{\cal G} (b_U)$. It turns out that, see \cite{10.21468/SciPostPhys.10.3.076}
\ba\label{isoformulae}
\langle b_U\rangle_{\mathcal G} &=& \tr{T_2(B_0\ot\omega_0)\mathcal{R}_{\cal G}^{(2)}(U^{\ot1,1})},\\
\Var_{\cal G} (b_U) &=& \langle b_U^2\rangle_{\cal G} - \langle b_U\rangle_{\mathcal G}^2,
\ea
where $b_U^2=\tr{T_{(13)(24)}\mathcal{R}_{\cal G}^{(4)}(U^{\ot2,2})(\omega_0^{\ot2}\ot B_0^{\ot 2})}$ and the isospectral twirling of order $k$ of a unitary operator $U$ is defined as:
\ba
\mathcal{R}_{\cal G}^{(2k)}(U):=\int_{\cal G}\,d\mu_GG^{\dag\ot 2k}U^{\ot k,k}G^{\ot 2k},
\ea
where $U^{\ot k,k}=U^{\ot k}\ot U^{\dag\ot k}$ and the $T_\pi$ are operators for the permutations of $2k$ objects $\pi$. In practice, the isospectral twirling of order $k$ of an operator $U$ is the order $2k$ moment of the operator $U$. The result of this operation is an operator whose spectrum is the same as that of $U$, where the eigenvectors have been averaged over all elements of the group $\mathcal{G}$. 
The details of the evaluation of $\mathcal{R}_{\cal G}^{(2k)}(U)$ for several instances of $\mathcal G$ and $U$ are given in~\ref{isodetails}.

We now provide some examples. First we consider as core operators a couple of Clifford unitaries, a simple CNOT $U_\mathrm{c}=C_X$ which leaves the state $\ket{00}$ invariant and $U_\mathrm{c}=C_X (H\ot I)$ that prepares the maximally entangled state $| \Phi^+\rangle$. We also consider $U_\mathrm{c}=W(\theta)$ that prepares the maximal violating state for $\theta=\pi/4$. 
For the group $\mathcal{G}$ we consider both the full unitary group $\mathcal{U}$ and the Clifford group $\Cl$ applied symmetrically on both qubits or only on qubit $A$ or qubit $B$. In view of Theorem~\ref{theorem2}, we expect that asymmetric twirling will give better results for the probability of violation. The analytic expressions for the corresponding mean and standard deviation are shown in Table~\ref{tab:isospectral_twirling}. Note that the expressions for the mean coincide for the two groups, but differ for the standard deviation due to the fact that the Clifford group is a 3-design but not a 4-design, and so averages over $\Cl$ and $\mathcal{U}$ coincide up to the third moments~\cite{zhu_clifford_2016}. 
In Figure \ref{fig:isospectral_plot} we plot the means and variances for $U_\mathrm{c}=W(\theta)$ as a function of $\theta$. 

\begin{table}[!h]
    \centering
    \renewcommand{\arraystretch}{1.6}
    \begin{tabular}{|c|c|c|>{\centering\arraybackslash}m{6cm}|}
    \hline
    $\langle\rangle_{\cal G}$ & $C_X$&$C_X(H\otimes I)$&$W(\theta)$\\
    \hline
    $\langle b_U\rangle_{\mathcal{U}} $&$0.2$&$1/15$&$\frac{1}{15} (2 \sin (\theta )+1)$\\
    $\Var_{\cal U}(b_U)$& $0.79$ &$0.80$&$\frac{-22 \sin (\theta )+26 \cos (2 \theta )+5016}{6300}$\\
     $\Var_{\cal C}(b_U)$&$ 0.98$&$1.59$&$\frac{-64 \sin (\theta )+1440 \cos (\theta )+357 \cos (2 \theta )+3937}{3600}$\\
    $\langle b_U \rangle_{\mathcal{U}_A}$&$-2/3$&1/3&$\frac{1}{3} (\sin (\theta )+1)$\\
    $\Var_{\mathcal{U}_A}(b_U)$&$37/45$&$31/45$&$\frac{-7 \sin (\theta )+\cos (2 \theta )+45}{45}$\\
     $\Var_{\mathcal{C}_A}(b_U)$&$8/9$&$19/18$&$\frac{-2 \sin (\theta )-\cos (2 \theta )+3}{36}$\\
     $\langle b_U \rangle_{\mathcal{U}_B}$&$1$&$2/3$&$\frac{2}{3} (\sin (\theta )+\cos (\theta ))$\\
    $\Var_{\mathcal{U}_B}(b_U)$&$0$&$37/45$&$\frac{\sin (2 \theta )+37}{45}$\\
     $\Var_{\mathcal{C}_B}(b_U)$&$0$&$8/9$&$\frac{4\sin (\theta ) \cos (\theta )+8}{9}$\\
    \hline
    \end{tabular}
    
    \caption{Summary of the isospectral twirling for the operators $C_X=\dyad{0}\otimes{I}_B+\dyad{1}\otimes X_B$ and $W(\theta)\equiv (R_y(\theta) \otimes I) C_X (H \otimes I)$. The leftmost column indicates the quantity that is being averaged, together with the group over which the isospectral twirling is performed. The other columns show the corresponding value of the isospectral twirling when the argument of the twirling is the unitary operator indicated at the top of the column. Notice also that the functions correponding to the rightmost column are plotted in Fig.~\ref{fig:isospectral_plot}.}
    \label{tab:isospectral_twirling}
\end{table}
Combining the results in Fig.~\ref{fig:isospectral_plot} with the ones reported in Table~\ref{tab:isospectral_twirling} we observe the following.
\begin{fact}
The mean value of the expectation value over the CHSH operator $\langle b_U\rangle_{\cal G}$ is larger when obtained by twirling asymmetrically on only one qubit as opposed to both qubits symmetrically. The standard deviations tend to be larger, apart from a small range of $\theta$ in Fig.~\ref{fig:isospectral_B}, when using the Clifford group instead of the full Unitary group.
\end{fact}
This confirms once more that asymmetry is indeed a resource in order to violate CHSH, even when one considers the average over unitary groups.

Based on Fig.~\ref{fig:isospectral_plot} one can also make an educated guess of what are the best core unitaries $U_\mathrm{c}$ and groups $\mathcal{G}$  to obtain an ensemble of operators leading to a higher probability of CHSH violations. 
One simply looks for situations where the mean $|\langle b_U\rangle_{\mathcal G} |$ is large and the fluctuations are also large, so that CHSH violations are more likely. 

\begin{figure}[!t]
\begin{subfigure}[b]{0.33\linewidth}
        \centering
        \includegraphics[width=\linewidth]{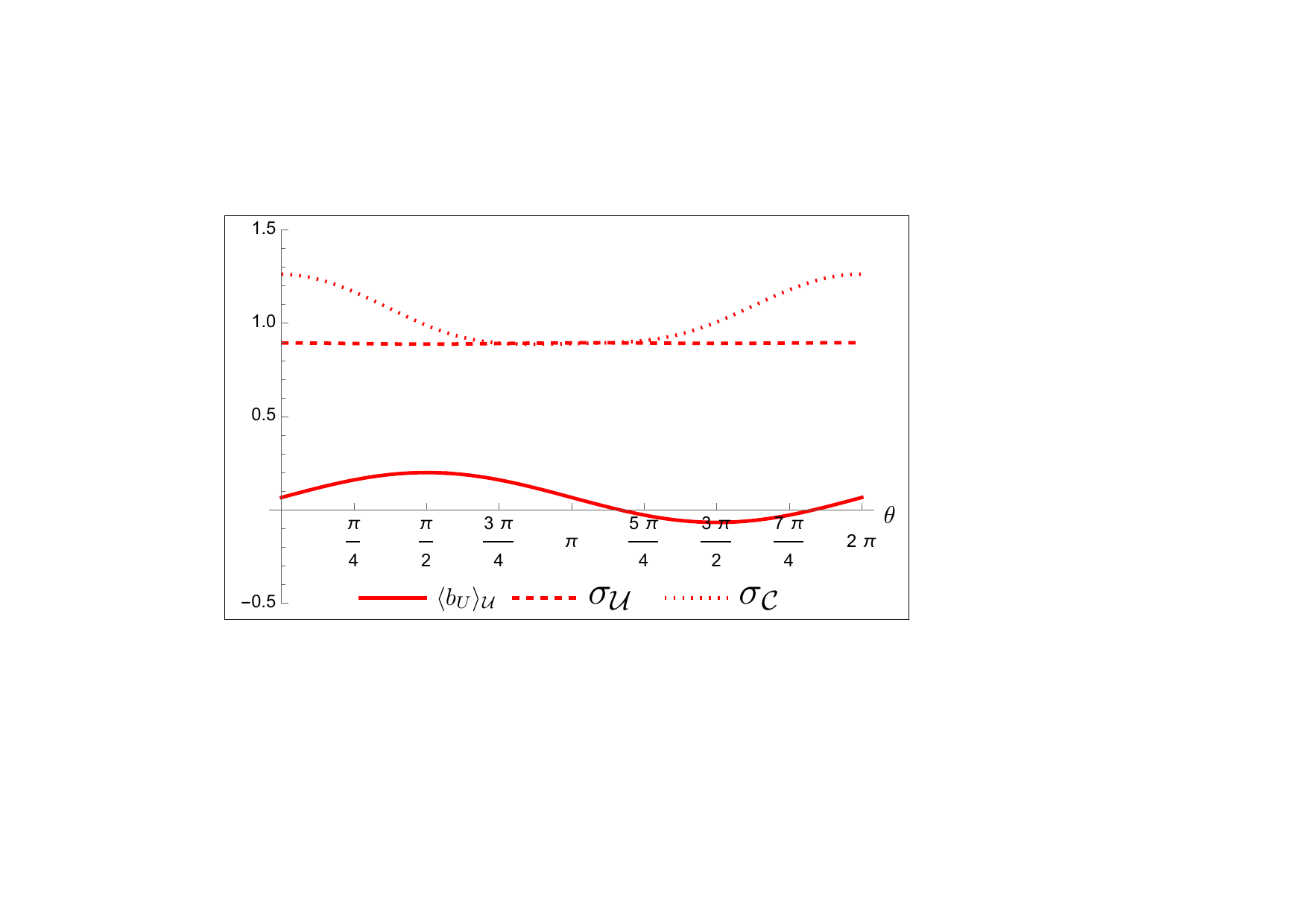}
        \caption{}
        \label{fig:isospectral_whole}
    \end{subfigure}
    \begin{subfigure}[b]{0.33\linewidth}
        \centering
        \includegraphics[width=\linewidth]{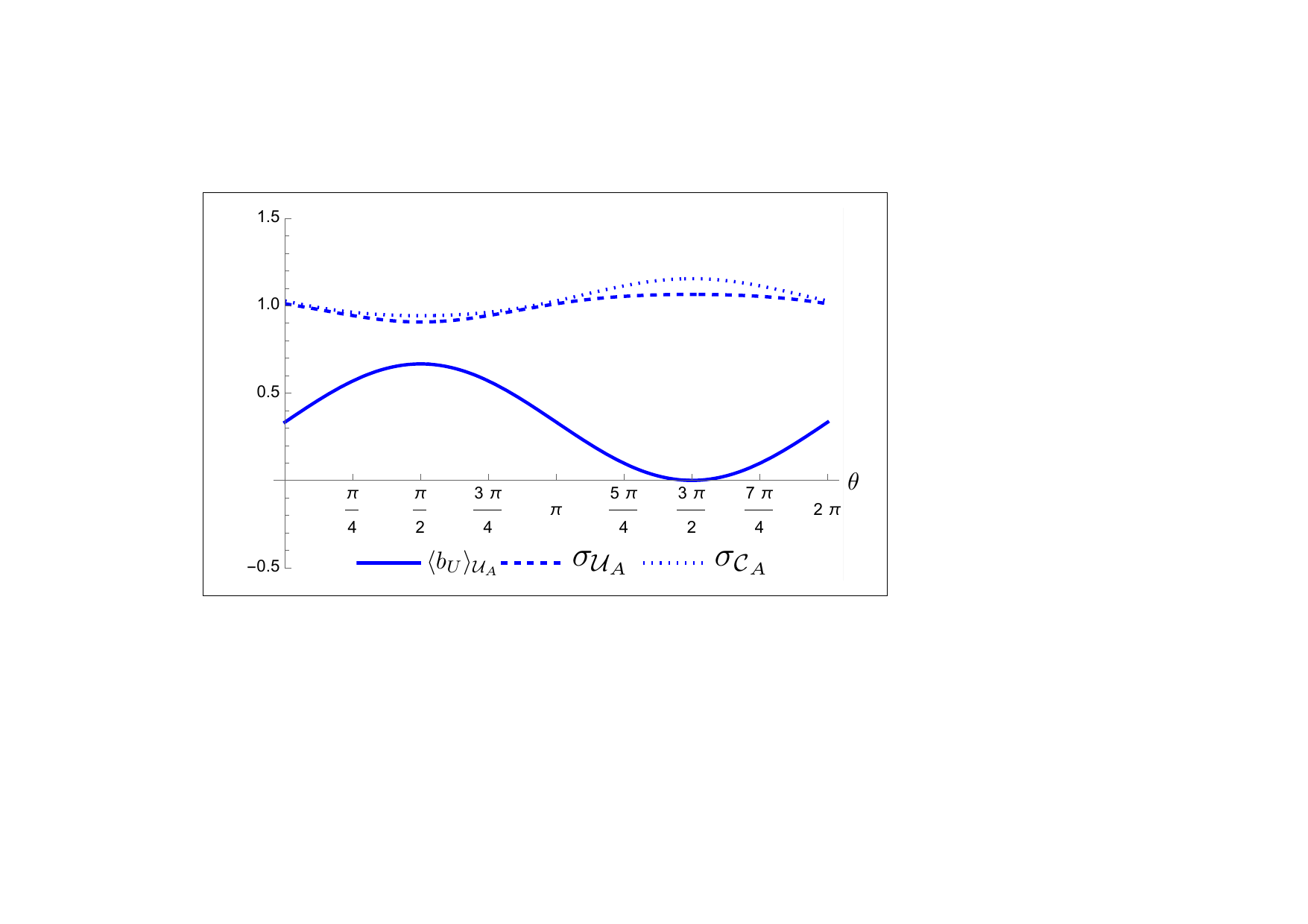}
        \caption{}
        \label{fig:isospectral_A}
    \end{subfigure}
    \begin{subfigure}[b]{0.33\linewidth}
        \centering
        \includegraphics[width=\linewidth]{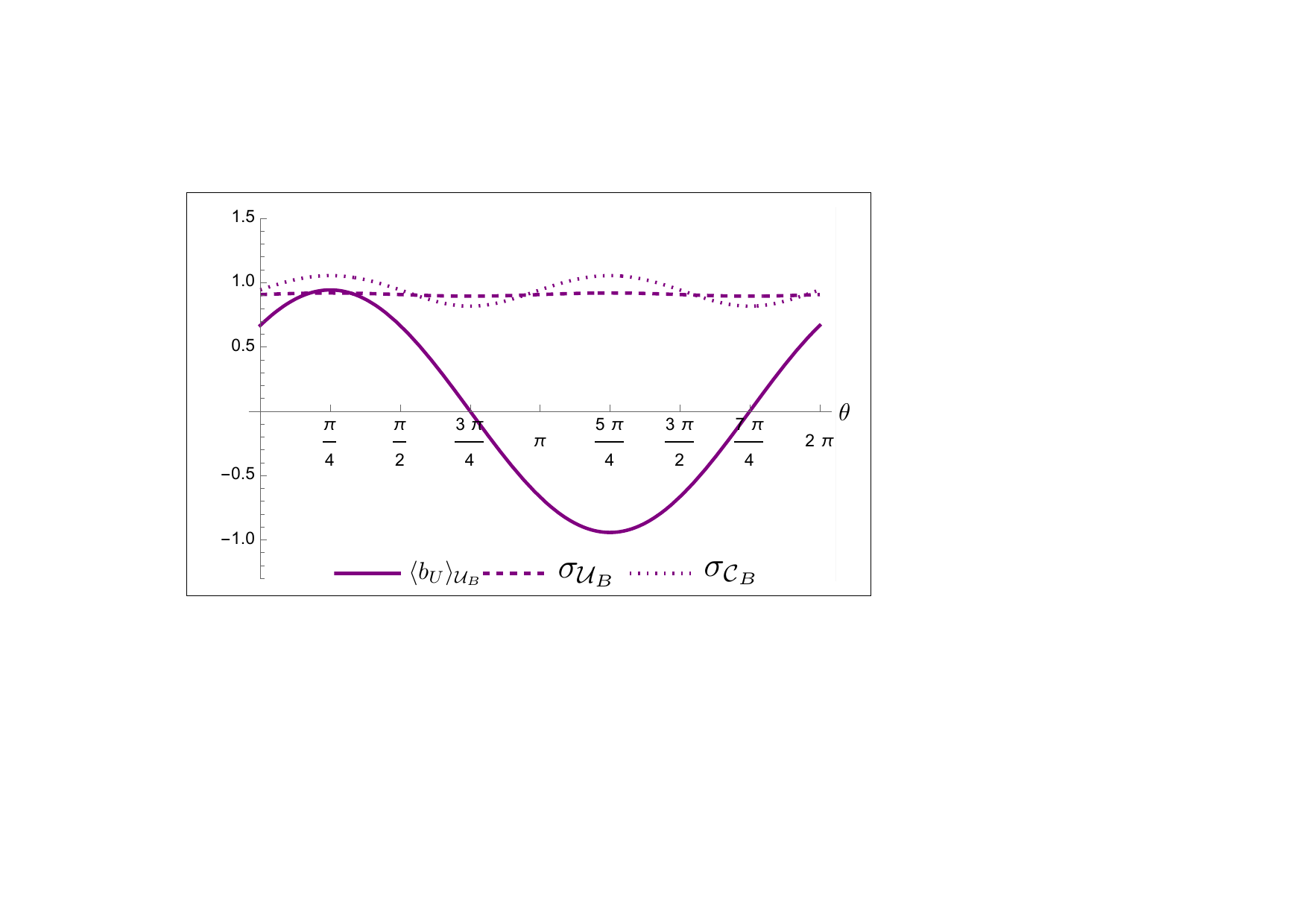}
        \caption{}
        \label{fig:isospectral_B}
    \end{subfigure}
\caption{Plots of the average $\langle b_U\rangle_{\mathcal G}$ and of standard deviations $\sigma_{\mathcal{G}}:=\sqrt{\Var_{\cal G} (b_U)}$
under isospectral twirling over the groups $\mathcal{G}=\mathcal{U},\mathcal{C}$ of the operator $U_{\rm viol}$. In every plot the solid line indicates the average, while the dashed and dotted lines indicate the average standard deviation with respect to the unitary and Clifford group respectively. Panel (a) shows the results when the average is performed over both qubit, while Panel (b) and (c) show the results when the average is performed on only one of the two qubit.}
\label{fig:isospectral_plot}
\end{figure}

\begin{table}
\begin{tabular}{|c|c|c|c|c|c|c|c|}
\hline
$U_c$&$P_{\rm viol}(\mathcal{U})$&$P_{\rm viol}(\mathcal{U}_A)$&$P_{\rm viol}(\mathcal{U}_B)$&$P_{\rm viol}(\mathcal{C})$&$P_{\rm viol}(\mathcal{C}_A)$&$P_{\rm viol}(\mathcal{C}_B)$&$\mathcal{M}_2(U)$\\
\hline
$C_X$&2.2\%&10.8\%&0&0&0&0&0\\
$C_X(H\otimes{I})$&2.5\%&8.3\%&10.8\%&0&0&0&0\\
$W(\pi/2)$&2.2\%&10.9\%&10.8\%&0&0&0&0\\
$W(\pi/3)$&2.3\%&9.6\%&17.2\%&0.3\%&4.2\%&16.6\% & 0.240\\
$
W (\pi/4)$&2.3\%&8.7\%&17.8\%&0.3\%&4.1\%&16.7\% & 0.332\\
$W(\pi/8)$&2.4\%&8.2\%&16.1\%&0.3\%&4.0\%&16.6\% &0.154\\
$\tilde{C}_X$&2.2\%&0&10.8\%&0&0&0&0\\
$\tilde{C}_X({I}\otimes H)$&2.5\%&10.8\%&8.3\%&0&0&0&0\\
$\tilde{W}(-\pi/2)$&2.2\%&10.8\%&10.9\%&0&0&0&0\\
$\tilde{W}(-\pi/3)$&2.3\%&17.2\%&9.6\%&0.3\%&16.6\%&4.2\% & 0.240\\
$\tilde{W}(-\pi/4)$&2.3\%&17.8\%&8.7\%&0.3\%&16.7\%&4.1\% & 0.332\\
$\tilde{W}(-\pi/8)$&2.4\%&16.1\%&8.2\%&0.3\%&16.6\%&4.0\% & 0.154\\
\hline
\end{tabular}
\caption{Probability of violating the CHSH inequality via isospectral twirling. $P_{\rm viol}(\mathcal{G})$ is the probability of violation under isospectral twirling over the group $\mathcal{G}$. Recall the definition for the core $W(\theta)\equiv (R_y(\theta)\otimes{I})C_X(H\otimes{I})$ . The tilde operators are just the corresponding operators with the role of $A$ and $B$ exchanged.}
\label{tab:isospectral_probabilities}
\end{table}

In Table~\ref{tab:isospectral_probabilities} we give the probabilities of violating the CHSH inequality in this setting. Together with $W(\theta=\pi/4)$ we also include results for $\theta = \pi/2,\pi/3,\pi/8$ for comparison. 

Based on the results of Table~\ref{tab:isospectral_probabilities} we can draw several conclusions.

\begin{fact}
Twirling over both qubits symmetrically gives $P\mathrm{viol}$ almost equal to the full Haar result when using $\mathcal{G}=\mathcal{U}$, while it results in very small $P_\mathrm{viol}$ when using $\mathcal{G}=\mathcal{C}$ (of course probabilities are exactly zero when the entire ensemble is made of Clifford unitaries because of Theorem~\ref{theorem1})
\end{fact}
Another interesting result is summarized in the following fact.
\begin{fact}
\label{fact:twirling_control}
Despite the operator $C_X$ preparing the $\ket{00}$ state, the effect of twirling over the control qubit gives fairly large $P_\mathrm{viol}$, while twirling over the idler gives $P\mathrm{viol}=0$.
\end{fact}
Fact~\ref{fact:twirling_control} can be understood by looking at Table~\ref{tab:isospectral_twirling} since the values of the mean and variance are large when twirling over the control qubit.

One can then observe that the highest value for $P_{\mathrm{viol}}$ is obtained using $W(\pi/4)$ as core operator. This is to be expected, as this corresponds to a maximal injection of non-stabilizerness. However, fairly large error in the angle $\theta$ from the optimal value $\pi/4$ produce similarly good results. This leads us to the next fact.
\begin{fact}
The probability of violation does not strongly depend on the amount of non-stabilizerness, but rather on the asymmetry in the twirling.
\end{fact}

Then, by looking at the probability of violations when the twirling is performed over the Clifford group, one can observe a similar behavior as the one observed for the full unitary group, namely that asymmetric twirling leads to larger probabilities. However, we can further note another fact.
\begin{fact}
\label{fact:clifford_twirling}
Performing the isospectral twirling using the Clifford group on only one qubit  gives a probability of violation that is comparable with the one obtained by averaging over the full unitary group. 
\end{fact}
Fact~\ref{fact:clifford_twirling} is kind of surprising, as it implies that random Clifford operations are as efficient as random unitaries when it comes to violate the CHSH inequality, provided the core operator has the necessary non-stabilizer resources.

Summarizing, by means of the isospectral twirling we have shown how to build  ensembles of unitary operators with large probability of non-local violations. Our findings are of course consistent with the theorems we proved, namely that in absence of non-stabilizer resources it is impossible to violate the CHSH inequality as well as the fact that resources must be {\it asymmetric} with respect to the qubits. 

\section{Conclusions and outlook\label{sec:conclusions}}

The interplay between non-stabilizer   and entanglement resources is fundamental for the violation of the CHSH inequality. In this paper, we showed the resources involved in such violations as they are encoded in the quantum operations performed on the initial preparation before a resource-free reference measurement.

We showed that, in order to obtain a violation it must be both entangling and have non stabilizing power. Moreover, it must be asymmetric and - surprisingly - the non-stabilizer resource SE must be local. We compute the probability of violation given the resources. Then, employing results from representation theory, we systematically prepare ensembles of unitary evolutions that provide higher probability of violation. These techniques represent a modelization in quantum control, where one has limited control over the evolution one can implement in the system. 

In perspective, one would be interested in knowing how the proposed setting and techniques can be employed to study higher dimensional systems and study other fundamental probes of quantumness such as quantum discord and contextuality. Finally, we plan to study Hardy non-locality~\cite{PhysRevLett.68.2981}, and possibly other settings of non-locality~\cite{PhysRevLett.108.100402,PhysRevLett.132.250205}, from the resource-theoretic point of view by extending the proposed setting to general quantum operations. 

\section*{Acknowledgements}
This research was funded by the Research Fund for the Italian Electrical System under the Contract Agreement "Accordo di Programma 2022–2024" between ENEA and Ministry of the Environment and Energetic Safety (MASE)- Project 2.1 "Cybersecurity of energy systems". AH, JO and Stefano Cusumano acknowledge support from the PNRR MUR project PE0000023-NQSTI. AH acknowledges support from the PNRR MUR project CN 00000013-ICSC. JO acknowledges ISCRA for awarding this project access to the LEONARDO super-computer, owned by the EuroHPC Joint Undertaking, hosted by CINECA (Italy) under the project ID: PQC- HP10CQQ3SR. AH thanks M. Howard for interesting discussions and comments. The authors wish to thank the anonymous Referee for their comments and suggestions.

\appendix

\section{Proof of Theorem \ref{TheoremI}}\label{appendixxx}
We provide the algebraic derivation of the functional forms for the maximal CHSH violation $C(t)$ and the non-local non-stabilizerness $M_{NL}(t)$, with $t := \sin^2(2\theta)$, and we prove that the optimal coefficient in Eq.~\eqref{Eq. C(t)Inequality} is $\alpha_{\text{opt}} = \ln(2)/\sqrt{2}$.

We first justify the restriction of the Schmidt parameter $\theta$ to the interval $[0, \pi/4]$. Since the Schmidt decomposition is invariant under permutations of the Schmidt coefficients, the states $\cos\left(\theta\right) \ket{00}+\sin\left(\theta\right)\ket{11}$ and $\sin\left(\theta\right)\ket{00}+\cos\left(\theta\right)\ket{11}$ are locally unitarily equivalent. Consequently, any local-unitary invariant depending only on the Schmidt spectrum satisfies: 
\begin{equation}
    F(\theta) = F\left( \frac{\pi}{2} - \theta \right).
\end{equation}
In particular, this holds for both $C(\theta)$ and $M_{NL}(\theta)$.
Therefore, every local-unitary equivalence class admits a representative with $\theta \in [0, \pi/4]$, and we may restrict the analysis to this interval without loss of generality. 

Since Eq.~\ref{Eq. C(t)Inequality} becomes an equality at $t=1$, the largest admissible value of $\alpha$ is fixed by the limiting behavior of $g(t)$ as $t \rightarrow 1$. Setting $t=1-\epsilon$ for $\epsilon \rightarrow 0^+$, we apply a Taylor expansion:
\begin{align}
    \log(1-\epsilon+\epsilon^2) &= -\frac{\epsilon}{\ln(2)} + O(\epsilon^2), \\
2\sqrt{1+t} - 2\sqrt{2} &= 2\sqrt{2-\epsilon} - 2\sqrt{2} = -\frac{\epsilon}{\sqrt{2}} + O(\epsilon^2).
\end{align}
Evaluating the limit of the ratio:
\begin{align}
\lim_{t \to 1} g(t) = \frac{-\epsilon/\sqrt{2}}{-\epsilon/\ln(2)} = \frac{\ln(2)}{\sqrt{2}} = \alpha_{\rm opt}.
\end{align}
To prove that $\alpha_{\text{opt}}$ is sufficient for the bound to hold globally, we define the gap function: 
\begin{align}
    h_{\alpha}(t) = 2\sqrt{2} + \alpha \log(1-t+t^2) - 2\sqrt{1+t}.
\end{align}
Since, by construction, $h_{\alpha_{\text{opt}}}(1)=0$, it suffices to show that $h_{\alpha_{\text{opt}}} (t)$ is non-increasing on $[0,1]$, which implies $h_{\alpha_{\text{opt}}}(t)\geq 0$ for all $t\in[0,1]$. The derivative is: 
\begin{align}
 h_{\alpha}'(t) = \frac{\alpha}{\ln(2)} \frac{2t-1}{1-t+t^2} - \frac{1}{\sqrt{1+t}}.  
\end{align}
Substituting the optimal value $\alpha_{\text{opt}}$: 
\begin{align}
    h^{'}_{\alpha_{\text{opt}}} (t)= \frac{2t - 1}{\sqrt{2} (1-t+t^2)} - \frac{1}{\sqrt{1+t}}.
\end{align}
For $t \in [0, 1/2[$, it holds $2t-1 < 0$, ensuring $h^{'}_\alpha (t) < 0$. For $t \in [1/2, 1]$, by cross-multiplying, the condition $h^{'}_\alpha(t) \leq 0$ is equivalent to  
\begin{align}
    (2t-1)\sqrt{1+t} \leq \sqrt{2}(1-t+t^2).
\end{align}
Notice that the terms $(1-t+t^2)$ and $\sqrt{1+t}$ are positive. Squaring both sides gives: 
\ba
\nonumber
    (2t-1)^2(1+t) \leq 2(1-t+t^2)^2 &&\iff (4t^2-4t+1)(1+t) \leq 2(1-2t+3t^2-2t^3+t^4) \\
    \nonumber
    &&\iff 4t^3-3t+1 \leq 2-4t+6t^2-4t^3+2t^4 \\
    &&\iff 0 \leq 2t^4-8t^3+6t^2-t+1 =: A(t).
\ea
To prove $A(x) \geq 0$, we examine its derivatives: 
\begin{align}
    A'(t) = 8t^3 - 24t^2 + 12t - 1, \quad A''(t) = 12(2t^2 - 4t + 1).
\end{align}
The roots of $A^{''}(t)$ are $1 \pm \sqrt{2}/2$. On $[1/2, 1]$, $A^{''} < 0$, so $A'(t)$ is strictly decreasing. Since at the left endpoint it holds $A'(1/2)=0$, it follows $A^{'}(t) \leq 0$ on the interval. Therefore, $A(t)$ decreases toward $A(1) = 0$, confirming $A(t) \geq 0$. This implies that $h^{'}_{\alpha_{\rm opt}}(t) \leq 0$  throughout the interval $[0,1]$, and therefore $h_{\alpha_{\rm opt}}(t)$ is monotonically non-increasing. Evaluating the function at the right endpoint yields: 
\begin{align}
    h_{\alpha_{\rm opt}}(1) = 2\sqrt{2} + \frac{\ln(2)}{\sqrt{2}} \log_2(1) - 2\sqrt{2} = 0.
\end{align}
Since the function decreases monotonically toward a minimum value of $0$ at $t=1$, it follows that $h_{\alpha_{\rm opt}}(t) \geq 0$ for all $t \in [0,1]$. This confirms that the bound of Eq.~\eqref{Eq. C(t)Inequality} is  satisfied for all $t \in [0,1]$, completing the proof of sufficiency for the optimal coefficient. 

\section{Isospectral twirling\label{sec:appendix}}

\subsection{Fluctuations of \texorpdfstring{$b_{U}$}{bU} for the Haar measure}
In order to obtain an estimate of the probability of violation through the Chebyshev inequality, we need to compute also the fluctuations, as given by the variance $\Var_U(b_U)$, of the expectation value of $B_0$. This is defined as:
\ba
\label{eq:standard_deviation}
\Var_U(b_U)=\left\langle\tr{B_0U\omega_0U^\dag}^2\right\rangle_U-\left\langle\tr{B_0U\omega_0U^\dag}\right\rangle_U^2=\left\langle\tr{B_0U\omega_0U^\dag}^2\right\rangle_U,
\ea
so that we need to compute:
\ba \langle b_U^2\rangle_{U}=\left\langle\tr{B_0U\omega_0U^\dag}^2\right\rangle_U=\int_{\cal U}\,d\mu_U\tr{B_0U\omega_0U^\dag}^2.
\ea
We can rewrite the argument of the Haar average as:
\ba
b_U^2=\tr{B_0U\omega_0U^\dag}^2=\tr{B_0U\omega_0U^\dag\ot B_0U\omega_0U^\dag}=\tr{B_0^{\ot2}(U^{\ot2}\omega_0^{\ot2}U^{\dag\ot2})}.
\ea
Inserting this back in the average one gets:
\ba
\nonumber
\int_{\cal U}\,d\mu_U\tr{B_0U\omega_0U^\dag}^2=\int_{\cal U}\,d\mu_U\tr{B_0^{\ot2}(U^{\ot2}\omega_0^{\ot2}U^{\dag\ot2})}=\tr{B_0^{\ot2}\int_{\cal U}\,d\mu_U(U^{\ot2}\omega_0^{\ot2}U^{\dag\ot2})}.\\
\ea
We can see that $\Var_U(b_U)$ depends on the second moment of the state $\omega_0$. The second moment of an operator $O$ is given by: 

\ba
\label{eq:second_moment}
\mathcal{R}_{\cal G}^{(2)}(O)=\frac{1}{d^2-1}\left[\left(\tr{O}-d^{-1}\tr{T_2O}\right)I+\left(\tr{T_2O}-d^{-1}\tr{O}\right)T_2\right].
\ea
In case $O$ is the two-fold copy of a state, i.e. $O=\omega_0^{\ot2}$, one obtains:
$\int_{\cal U}\,d\mu_U(U^{\ot2}\omega_0^{\ot2}U^{\dag\ot2})=(I+T_2)/20$. 
Putting this back into the trace we get
$\left(\tr{B_0^{\ot2}I}+\tr{B_0^{\ot2}T_2}\right)/ 20=\tr{{I}_4}/5=4/5$
so that $\sigma_{B_0}=2/{\sqrt{5}}$.

Having the variance one can apply the Chebyshev inequality to obtain a first, rough, upper bound to the probability of violating the Bell inequality. The Chebyshev inequality states that the probability of a random variable $X$ to differ from its average value $\bar{X}$ is bounded as $P(\left|X-\bar{X}\right|\geq k\sigma)\leq{1}/{k^2}$. 
In our case the average value is $0$, while $\sigma=2/\sqrt{5}$, which means that in order for $\tr{B_0U\omega_0U^\dag}$ to be greater than $2$ one has to set $k=\sqrt{5}$, so that:
\ba
P(|\tr{B_0U\omega_0U^\dag}|\geq2)\leq\frac{1}{5}. \label{eq:chebishev_violation}
\ea

\subsection{Averages from representation theory}
\label{isodetails}
Let us now go back to the expectation value of $B_0$ and rewrite it as:
\ba
\tr{B_0U\omega_0U^\dag}=\tr{T_2(B_0U\ot\omega_0U^\dag)}=\tr{T_2(B_0\ot\omega_0)U^{\ot1,1}},
\ea
where we applied the swap trick $\tr{AB}=\tr{T_2(A\ot B)}$. Applying the isospectral twirling to the expression above we get:
 \ba
\langle b_U\rangle_{\mathcal{G}}=\langle\tr{B_0U\omega_0U^\dag}\rangle_{\mathcal{G}}=\tr{T_2(B_0\ot\omega_0)\mathcal{R}_{\cal G}^{(2)}(U^{\ot1,1})}.
\ea
We thus see that the average value of $B_0$ depends on the $k=1$ isospectral twirling, which corresponds to the second moment operator of $U^{\ot1,1}$. We can then apply Eq.~\eqref{eq:second_moment} and compute the $k=2$ moment operator of $U^{\ot1,1}$, which is given by:
\ba
\nonumber
\mathcal{R}^{(2)}_{\cal U}(U^{\ot1,1})=\frac{1}{d^2-1}\left[\left(\tr{U^{\ot1,1}}-d^{-1}\tr{T_2U^{\ot1,1}}\right)I+\left(\tr{T_2U^{\ot1,1}}-d^{-1}\tr{U^{1,1}}\right)T_2\right].\\
\ea
The two traces can be easily evaluated:
\ba
\tr{U^{1,1}}=\left|\tr{U}\right|^2=c_2(U)=\sum_{i,j}u_iu_j^*,\\
\tr{T_2U^{\ot1,1}}=\tr{UU^\dag}=\tr{I}=d.
\ea
The function $c_2(U)$ is the two point spectral form factor of the unitary operator $U$, and as it can be seen, it only depends on the spectrum of $U$. Plugging these expressions back into the one for the isospectral twirling we obtain:
\ba
\mathcal{R}^{(2)}_{\cal U}(U^{\ot1,1})=\frac{c_2(U)-1}{15}I+\frac{16-c_2(U)}{60}T_2.
\ea
We can plug this expression back into the expectation value, obtaining:
\ba
\langle b_U\rangle_{\mathcal{U}}=\tr{T_{(12)}(B_0\ot\omega_0)\mathcal{R}_{\cal U}^{2}(U^{\ot1,1})}=\frac{c_2(U)-1}{15}\tr{T_2\left(B_0\otimes\omega_0\right)}=\frac{c_2(U)-1}{15}.
\ea
This expression only depends on the 2 point spectral form factor, and can thus be trivially upper bounded considering $|c_2(U)|\leq d^2=16$, leading to 
\ba
\left|\langle b_U\rangle_{\cal U}\right|\leq1.
\ea
While this upper bound is still quite far from the one needed to observe violations of locality, two points must be noted. First, the upper bound is anyway better than the value obtained with the Haar averaged state. Second, this quantity only depends on the spectrum of $U$, and thus we can hope that, in presence of large enough fluctuations, one can optimize the choice of the unitary  $U$ in order to obtain a larger probability of violating the CHSH inequality.

To pursue this path we need to compute the isospectral twirling of the variance, as defined in Eq.~\eqref{eq:standard_deviation}. In practice we need to compute
\ba
\nonumber
\langle b_U^2\rangle=\left\langle\tr{B_0U\omega_0U^\dag}^2\right\rangle_{\cal G}&=&\left\langle\tr{T_{(13)(24)}U^{\ot2,2}(\omega_0^{\ot2}\ot B_0^{\ot 2})}\right\rangle_{\cal U}\\
&=&\tr{T_{(13)(24)}\mathcal{R}_{\cal G}^{(4)}(U^{\ot2,2})(\omega_0^{\ot2}\ot B_0^{\ot 2})}.
\ea
So, in order to compute the standard deviation under isospectral twirling we need to compute the moment of order $k=4$ of the unitary operator $U^{\ot2,2}$. At this point we must note that one is not forced to average over the whole unitary group, but in principle it is also possible to perform the isospectral twirling over the Clifford group. As the latter is known to form a 3-design~\cite{zhu_clifford_2016}, it is clear that we are going to observe a difference only when computing the standard deviation, which involves the fourth order average.

The fourth order average over the whole unitary group can be easily evaluated with the same techniques used for the $k=1,2$ moment operators, leading to the final result:
\ba
\tr{T_{(13)(24)}\mathcal{R}_{\cal U}^{(4)}(U^{\ot2,2})\left(\omega_0^{\ot2}\ot B_0^{\ot2}\right)}=\frac{1344-4c_2(U)+\tilde{c}_2(U)+2\Re[c_3(U)]+c_4(U)}{1680}.
\ea
This expression once again depends only on the spectrum of $U$, but this time the expression features also the three and four points spectral form factor, defined as:
\ba
\tilde{c}_2(U)=\sum_{i,j}u_i^2u_j^{*2},\qquad c_3(U)=\sum_{i,j,k}u_iu_ju_k^{*2},\qquad c_4(U)=\sum_{i,j,k,\ell}u_iu_ju^*_ku^*_\ell.
\ea
The variance under isospectral twirling $\Var_{\mathcal{U}}(b_U)$ is then worth:
\ba
\label{eq:isospectral_std_dev}
\Var_{\mathcal{U}}(b_U)=\frac{4 \left(41-28 c_2\right) c_2+15 \tilde{c}_2+30 \Re\left(c_3\right)+15c_4+20048}{25200}.
\ea

Let us now turn to the Clifford average. The formula to perform this average has been already derived~\cite{PhysRevLett.121.170502}, and it reads:
\ba
\nonumber
\mathcal{R}_{\cal C}^{(4)}(U)=\int_{\cal C}d\mu_C C^{\dag\ot4}U^{\ot2,2}C^{\ot4}=\sum_{\pi,\sigma\in S_4}W^+_{\pi\sigma}\tr{T_\pi QU^{\ot2,2}}QT_\sigma+W^-_{\pi\sigma}\tr{T_\pi Q^{\perp}U^{\ot2,2}}Q^{\perp}T_\sigma,\\
\label{eq:4_clifford}
\ea
where $Q^\perp=I^{\otimes 4}-Q$, and the Weingarten coefficients $W^\pm$ can be computed in terms of the characters of the symmetric group representations as:
\ba
W^\pm_{\pi\sigma}=\sum_{\lambda}\frac{d_\lambda^2}{(4!)^2}\frac{\chi^\lambda(\pi\sigma)}{D^{\pm}_\lambda},
\ea
where $\lambda$ labels the irreducible representations of the symmetric group $S_4$, $d_\lambda$ is the dimension of the corresponding irrep, $\chi_\lambda(\pi\sigma)$ is the character of the corresponding permutation and $D^+_\lambda=\tr{P_\lambda Q}$, $D^-_\lambda=\tr{P_\lambda Q^\perp}$, $P_\lambda$ being the projector onto the irrep $\lambda$.

One can then use Eq.~\eqref{eq:4_clifford} to compute the variance under isospectral twirling over the Clifford group to obtain:
\ba
\Var_{\mathcal{C}}(b_U)=\frac{7}{9}+\frac{|c_{U^2}|^2+2\Re[c_U^{*2}c_{U^2}]+|c_U|^4}{80}-\frac{|c_{UU^\dag}|^2+c_{(UU^\dag)^2}}{72}-\left(\frac{c_2(U)-1}{15}\right)^2,
\ea
where we have defined:
\ba
|c_U|^4=&&d^{-2}\sum_P|\tr{PU}|^4=d^{-2}\sum_{P}\sum_i|e^{-i\phi_i}\bra{\phi_i}U\ket{\phi_i}|^4,\\
|c_{U^2}|^2&=&d^{-2}\sum_P|\tr{PUPU}|^2=d^{-2}\sum_{P}\sum_{i,j}|e^{-i(\phi_i+\phi_j)}\bra{\phi_j}U\dyad{\phi_i}U\ket{\phi_j}|^2,\\
\nonumber
c_U^{*2}c_{U^2}&=&d^{-2}\sum_P\tr{PU^\dag}^2\tr{PUPU}\\
&=&d^{-2}\sum_{P}\sum_{i,j,k}e^{-i(\phi_i+\phi_j-2\phi_k)}\bra{\phi_j}U\dyad{\phi_i}U\ket{\phi_j}\bra{\phi_k}U^\dag\ket{\phi_k}^2,\\
|c_{UU^\dag}|^2&=&d^{-2}\sum_P|\tr{PUPU^\dag}|^2=d^{-2}\sum_P\sum_{i,j}|e^{-i(\phi_i-\phi_j)}\bra{\phi_j}U\dyad{\phi_i}U^\dag\ket{\phi_j}|^2,\\
\nonumber
c_{(UU^\dag)^2}&=&d^{-2}\sum_P\tr{UPU^\dag PUPU^\dag}\\
&=&d^{-2}\sum_P\sum_{i,j,k,\ell}e^{-i(\phi_i+\phi_j-\phi_k-\phi_\ell)}\bra{\phi_i}U\dyad{\phi_j}U^\dag\dyad{\phi_k}U\dyad{\phi_\ell}U^\dag\ket{\phi_i}.
\ea
where we have written the unitary operator $U$ as $\sum_ie^{-i\phi_i}\dyad{\phi_i}$ where $\ket{\phi_i}$ are eigenvectors of $U$ and $e^{-i\phi_i}$ the corresponding eigenvalues.
Notice that in contrast with the average over the unitary group, in this case it is not possible to give a closed form of the standard deviation in terms of the spectrum of $U$ alone. Indeed, the isospectral twirling over the Clifford group does depend on the matrix elements of $U$ in the Pauli basis. This means that in order to evaluate the standard deviation in this case we need the full expression of the unitary operator $U$.

\bibliographystyle{unsrtnat}
\bibliography{bell_CHSH_biblio}
\end{document}